\documentclass[transaction]{IEEEtran}
\IEEEoverridecommandlockouts
\usepackage{hhline}
\usepackage{amsmath}
\usepackage{amsthm}
\usepackage[utf8]{inputenc}
\usepackage[english]{babel}
\usepackage{amsfonts}
\usepackage{graphicx}
\usepackage{epsfig}
\usepackage[font=footnotesize]{caption}
\usepackage{subfig}
\usepackage{float}
\usepackage{epstopdf}
\usepackage{array}
\usepackage{multirow}
\usepackage{setspace}
\usepackage{color}
\usepackage[section]{placeins}
\usepackage{url}
\usepackage{algorithm}
\usepackage{algpseudocode}
\usepackage{multicol}
\usepackage{lipsum}
\usepackage{soul}
\usepackage{comment}
\usepackage{balance}
\usepackage{color}
\usepackage[normalem]{ulem}
\usepackage{amssymb}
\usepackage{array}
\usepackage{soul}

\usepackage{xcolor}
\usepackage{cite}
\usepackage{tikzpagenodes}

\newcommand{\etal}{\textit{et al.}}
\newtheorem{theorem}{Theorem} 
\newtheorem{corollary}{Corollary}
\setlength\parindent{10pt}   
\setlength{\parskip}{0pt}

\begin{document}

\title{Wake-Up Radio based Access in 5G under Delay Constraints: Modeling and Optimization}
\author{Soheil~Rostami, Sandra Lagen, M\'ario Costa,~\IEEEmembership{Member,~IEEE}, \\ Mikko Valkama,~\IEEEmembership{Senior~Member,~IEEE}, and Paolo Dini

\thanks{S. Rostami is with Huawei Technologies Oy (Finland) Co. Ltd, Helsinki, Finland, and also with Tampere University, Finland. E-mail: soheil.rostami1@huawei.com, soheil.rostami@tuni.fi.}
\thanks{S. Lagen  and P. Dini are with Centre Tecnol\`ogic de Telecomunicacions de Catalunya (CTTC/CERCA), Barcelona, Spain. E-mails: \{sandra.lagen, paolo.dini\}@cttc.es.}
\thanks{M. Costa is with Huawei Technologies Oy (Finland) Co. Ltd, Helsinki, Finland. E-mail: mariocosta@huawei.com.}
\thanks{M. Valkama is with the Department of Electrical Engineering, Tampere University, Finland. E-mail: mikko.valkama@tuni.fi.}
\thanks{{Limited subset of early stage results presented at IEEE GLOBECOM, Dec. 2019 \cite{Globecomnew}.}}
\vspace{-5mm}}

\maketitle
\begin{tikzpicture}[remember picture,overlay]
\footnotesize
\node[align=center,color=red] at ([yshift=3em]current page text area.north) {This is the author's version of an article that has been accepted for publication in IEEE Transactions on Communications.};
\node[align=center,color=red] at ([yshift=2em]current page text area.north) {Changes were made to this version by the publisher prior to publication.};
\node[align=center,color=black] at ([yshift=-2em]current page text area.south) {Copyright (c) 2019 IEEE. Personal use of this material is permitted. However, permission to use this material for any other purposes must be \\ obtained from the IEEE by sending a request to pubs-permissions@ieee.org.};
\end{tikzpicture}

\begin{abstract}
  Recently, the concept of wake-up radio based access has been considered as an effective power saving mechanism for 5G mobile devices. In  this  article,  the average power consumption of a wake-up radio enabled mobile device is analyzed and modeled by using a semi-Markov process. Building on this, a delay-constrained optimization problem is then formulated, to maximize the device energy-efficiency under given latency requirements, allowing the optimal parameters of the wake-up scheme to be obtained in closed form.  The provided numerical results show that, for a given delay requirement, the proposed solution is able to reduce the power consumption by up to {40\%} compared with an optimized discontinuous reception (DRX) based reference scheme.
\end{abstract}
\vspace{-2mm}
\begin{IEEEkeywords}
Energy efficiency, mobile device, DRX, wake-up radio, 5G, optimization, delay constraint.
\end{IEEEkeywords}

\IEEEpeerreviewmaketitle

\section{Introduction}
\label{sec:intro}
In order for the emerging fifth generation (5G) mobile networks to satisfy the ever-growing needs for higher data-rates and network capacities, while simultaneously facilitating other quality of service (QoS) improvements, computationally-intensive physical layer techniques and high bandwidth communication are essential~\cite{nr2017}, \cite{Boccardi2014}. At the same time, however, the device power consumption tends to increase which, in turn, can deplete the mobile device's battery very quickly. {Moreover, it is estimated that feature phones and smartphones consume  2 kWh/year and 7 kWh/year, respectively,    based on charging every 60$^{th}$ hour equal to 40\% of battery capacity every day and a standby scenario of 50\%  of the remaining time \cite{Fehske}. Also, the   carbon footprints of production of feature phones  and smartphones are  estimated to be    18 kg and 30 kg CO2e per  device, respectively, which still is the major contributor  of CO2 emission of mobile communication systems   \cite{Fehske}. }

In general, battery lifetime is one of the main issues that mobile device consumers consider important from device usability point of view~\cite{Qualcomm2013}. However, since the evolution of battery technologies tends to be slow~\cite{Lauridsenthesis}, the energy efficiency of the mobile device's main functionalities, such as the cellular subsystem, needs to be improved~\cite{Qualcomm2013,Carroll2010,Lauridsen2015}. Furthermore, since the data traffic has been largely downlink-dominated~\cite{ITU2015},  the  power saving mechanisms for cellular subsystems in receive mode are of great importance. 

The 3rd generation partnership project (3GPP) has specified  discontinuous reception (DRX) as one of the \emph{de facto} energy saving mechanism for long-term evolution (LTE), LTE-Advanced and 5G New Radio (NR) networks \cite{TS38300,  erik, tr36.213, TS38.30}. DRX allows the mobile device to reduce its energy consumption by switching off some radio  modules for long periods of time, activating them only for short intervals. To this end, the modeling and optimization of DRX mechanisms have  attracted a large amount of research interest in recent years. The authors in \cite{Liu} proposed an adaptive approach to configure DRX parameters according to  users’ activities, aiming to balance power saving and packet delivery latency. Koc \etal~formulated the  DRX mechanism as a  multi-objective optimization problem in \cite{koc}, satisfying the latency requirements of active traffic flows and the corresponding preferences for power saving.  In \cite{Mihov}, DRX is modeled as a semi-Markov process with three states (active, light-sleep, and deep-sleep), and the average power consumption as well as the average delay are calculated and optimized.  Additionally, the authors in \cite{Ramazanali} utilized exhaustive search over a large parameter set to configure all DRX parameters. Such a method may not be attractive from a computational complexity perspective for real-time/practical applications, however, it provides the  optimal DRX-based power consumption and is thus used in this article as a benchmark. 

To improve the device's energy-efficiency beyond the capabilities of ordinary DRX, the concept of wake-up radio based access has been discussed, e.g., in \cite{Lauridsenthesis, Demirkol2009,  Mazloum2014}. Specifically, in the cellular communications context, the wake-up radio based approaches have been recently discussed and described, e.g., in~\cite{Globecom} and \cite{Lauridsen2016}. 
 In such concept, the mobile device monitors only  a narrowband control channel signaling referred to as wake-up signaling at specific time instants and subcarriers, in an OFDMA-based radio access systems such as LTE or NR, in order to decide whether to process the actual upcoming physical downlink control channel (PDCCH) or discard it. Compared to DRX-based systems, this directly reduces the buffering requirements and processing of empty subframes as well as the corresponding power consumption. Furthermore, in~\cite{Globecom}, the concept of a low-complexity wake-up receiver (WRx) was developed to decode the corresponding wake-up signaling, and  to acquire the necessary time and frequency synchronization.  {A wake-up scheme that enhances the power consumption of machine type communications  (MTC) is introduced in 3GPP LTE Release-15 \cite{TS36.300}, which is based on a narrowband signal, transmitted over  the available symbols of configured subframes. It is also considered as the starting point of NR power saving study item in 3GPP NR  Release-17~\cite{NR_PS}.}

In general, the existing wake-up concepts and algorithms, such as those described in \cite{Globecom,Lauridsenthesis,Demirkol2009,Mazloum2014, Lauridsen2016,wpwrx, Tang, Wilhelmsson, Kouzayha,Oller,Aoudia, ponna2018saving, eee1, eee5}, build on static operational parameters that are determined by the radio access network at the start of the user’s session, and kept invariant, even if traffic patterns change. Accordingly, methods to optimize such parameters that characterize the employed wake-up scheme are needed, to further reduce energy consumption according to the traffic conditions. This is one of the main objectives of this paper. Specifically, the main contributions of this paper are as follows. Firstly, the wake-up radio based access scheme is modeled by means of a semi-Markov process. {In the model, we consider realistic WRx operation by introducing start-up and power-down periods of the baseband unit (BBU), false alarm and misdetection probabilities of the wake-up signaling, as well as the packet service time.} With such a model, the average power consumption and buffering delay can be accurately quantified and estimated for a given set of wake-up related parameters.  Secondly, by utilizing such a mathematical model, the  minimization   of terminal's power consumption under Poisson traffic model is addressed for a given delay constraint. As a result, a closed-form optimal solution  for   the operational parameters is obtained.  Furthermore, the range of packet arrival rates, for which the wake-up scheme is suitable and energy-efficient, is determined. Finally, simulation-based numerical results are provided in order to validate the proposed model and methods as well as to investigate the power consumption  of our proposed solution compared to the optimized DRX-based reference mechanism proposed in \cite{Ramazanali}. The approach described in  \cite{Ramazanali} is selected as the benchmark since it provides the optimal power consumption in DRX-based reference systems. Furthermore,  to the best of our knowledge, virtually all of the DRX-based literature ignores the start-up and power-down energy consumption, {and \cite{Ramazanali} also neglects the packet service time}. Therefore, we have modified the approach in \cite{Ramazanali} slightly in order to consider such additional energy consumption in the optimization. 

The rest of this paper is organized as follows.   Section~\ref{sec:Background} presents a brief review of the considered wake-up scheme and its corresponding parameters, and defines basic system assumptions.   In 
Section~\ref{sec:sysmodel}, we model the wake-up based access scheme by means of a semi-Markov process and derive the power consumption as well as the buffering delay. Then, building on these mathematical models, in Section~\ref{sec:prob}, the optimization problem is formulated and the optimal solution for minimum power consumption is found in closed-form.  These are followed by numerical results, remarks, and conclusions  in Sections~\ref{sec:eval}, \ref{sec:remark}, and~\ref{sec:conc}, respectively. Some details on the analysis related to the modeling of power consumption are reported in the Appendix. For readers' convenience, the most relevant variables used throughout this paper are listed in Table~\ref {tab:vari}. Terminology-wise, we use gNB to refer to the base-station unit and UE to denote the mobile device, according to 3GPP NR specifications \cite{TS38300}.

\begin{table}[!t] 
\scriptsize
\centering
\renewcommand{\arraystretch}{1.3}
\caption{Most important variables used throughout the article } 
\label{tab:vari}
\begin{tabular}{cclll}
\cline{1-2}
\multicolumn{1}{|c|} {Variable} & \multicolumn{1}{c|}{Definition }                                                              &  &  &  \\ \hhline{==~} 
\multicolumn{1}{|c|}{ $\text{PW}_{\text{1}}$ } & \multicolumn{1}{c|}{{power consumption of cellular module while WRx is active}}                                                               &  &  &  \\ \cline{1-2}
\multicolumn{1}{|c|}{ $\text{PW}_{\text{2}}$ } & \multicolumn{1}{c|}{\begin{tabular}[c]{@{}c@{}}power consumption of cellular module \\ at  active-decoding state (BBU is active) \end{tabular}} &  &  &  \\ \cline{1-2}
\multicolumn{1}{|c|}{ $\text{PW}_{\text{3}}$ } & \multicolumn{1}{c|}{\begin{tabular}[c]{@{}c@{}}power consumption of cellular module at\\   active-inactivity timer state (BBU is active) \end{tabular}}&  &  &  \\ \cline{1-2}
\multicolumn{1}{|c|}{ $\text{PW}_{\text{4}}$ } & \multicolumn{1}{c|}{{power consumption of cellular module at   sleep state}}                                                               &  &  &  \\ \cline{1-2}
\multicolumn{1}{|c|}{ $\text{S}_{k}$ } & \multicolumn{1}{c|}{UE state in the state machine modeling {($\text{S}_1$, \dots, $\text{S}_4$)}}                                                               &  &  &  \\ \cline{1-2}
\multicolumn{1}{|c|}{ $S(\tau_n)$ } & \multicolumn{1}{c|}{{UE state at the $\tau_n$ jump time}   }                                                          &  &  &  \\ \cline{1-2}
\multicolumn{1}{|c|}{ $\text{P}_{kl}$ } & \multicolumn{1}{c|}{transition probability from state $\text{S}_{k}$ to state $\text{S}_{l}$ }                            \\ \cline{1-2}
\multicolumn{1}{|c|}{ $\omega_{k}$ } & \multicolumn{1}{c|}{holding time for state  $\text{S}_{k}$   }                                     &  &  &  \\ \cline{1-2}
\multicolumn{1}{|c|}{$t_{p}$} & \multicolumn{1}{c|}{inter-packet arrival time }                                                               &  &  &  \\
\cline{1-2}
\multicolumn{1}{|c|}{$\lambda$} & \multicolumn{1}{c|}{packet arrival rate }                                                               &  &  &  \\ \cline{1-2}
\multicolumn{1}{|c|}{$t_{su}$} & \multicolumn{1}{c|}{start-up time of cellular module }                                                               &  &  &  \\ \cline{1-2}
\multicolumn{1}{|c|}{  $t_{pd}$ } & \multicolumn{1}{c|}{power-down time of cellular module}         &  &  &  \\ \cline{1-2}
\multicolumn{1}{|c|}{  $t_{w}$ } & \multicolumn{1}{c|}{wake-up cycle}                          &  &  &  \\ \cline{1-2}
\multicolumn{1}{|c|}{  $t_{i}$ } & \multicolumn{1}{c|}{length of inactivity timer}                          &  &  &  \\ \cline{1-2}
\multicolumn{1}{|c|}{  $t_{s}$ } & \multicolumn{1}{c|}{{ 
packet service time}}                          &  &  &  \\ \cline{1-2}
\multicolumn{1}{|c|}{  $t_{on}$ } & \multicolumn{1}{c|}{on-duration time}                          &  &  &  \\ \cline{1-2}
\multicolumn{1}{|c|}{  $\overline{\text{P}}_c$ } & \multicolumn{1}{c|}{average power consumption}                          &  &  &  \\ \cline{1-2}
\multicolumn{1}{|c|}{  $\overline{\text{P}}_b$ } & \multicolumn{1}{c|}{average power consumption over boundary constraint}                          &  &  &  \\ \cline{1-2}
\multicolumn{1}{|c|}{  $\overline{\text{D}}$ } & \multicolumn{1}{c|}{average buffering delay}                          &  &  &  \\ \cline{1-2}
\multicolumn{1}{|c|}{  $\overline{\text{D}}_{\max}$ } & \multicolumn{1}{c|}{maximum tolerable delay or delay bound}                          &  &  &  \\ \cline{1-2}
\multicolumn{1}{|c|}{$t_{w_b}$} & \multicolumn{1}{c|}{minimum feasible wake-up cycle over boundary constraint}                                                               &  &  &  \\ \cline{1-2}
\multicolumn{1}{|c|}{$t_w^*$} & \multicolumn{1}{c|}{optimal value of wake-up cycle}                                                               &  &  &  \\ \cline{1-2}
\multicolumn{1}{|c|}{$t_i^*$} & \multicolumn{1}{c|}{optimal value of inactivity timer}                                                               &  &  &  \\ \cline{1-2}
\multicolumn{1}{|c|}{$\lambda_t$} & \multicolumn{1}{c|}{turnoff packet arrival rate}                                                               &  &  &  \\ \cline{1-2}
\multicolumn{1}{|c|}{$\eta$} & \multicolumn{1}{c|}{relative power saving factor}                                                               &  &  &  \\ \cline{1-2}
\multicolumn{1}{|c|}{$\phi$} & \multicolumn{1}{c|}{{power consumption ratio of UE at S$_2$ and S$_3$}}                                                               &  &  &  \\ \cline{1-2}
\multicolumn{1}{l}{}    & \multicolumn{1}{l}{}             &  &  & 
 
\end{tabular}
\end{table} 

\section{Basic Wake-Up Radio Concept and Assumptions}
\label{sec:Background}

In the considered wake-up radio based scheme, or wake-up scheme (WuS) for short,  as presented   in \cite{Globecom}, the   mobile device is configured to monitor a narrowband wake-up signaling channel in order to enhance its battery lifetime.   Specifically, in every wake-up cycle (denoted by $t_w$), the WRx monitors the so-called physical downlink wake-up channel (PDWCH) for a specific on-duration time ($t_{on}$) in order to determine whether data has been scheduled or not. Occasionally, based on the interrupt signal from WRx, the BBU switches on, decodes both    PDCCH   and  physical downlink shared channel (PDSCH),  and  performs normal connected-mode procedures. The WuS can be adopted for both connected and idle  states of radio resource control \cite{Globecom}, and can be configured based on maximum tolerable paging delay that idle users may experience,  or alternatively, based on the delay requirements of a specific traffic type at connected state.

The wake-up signaling  per each WRx contains a single-bit control information, referred to as the wake-up indicator (WI), where a WI of $1$ indicates the WRx to wake up the BBU, because there is one or multiple packets to be received, while a WI of $0$ signals the opposite. Each WI  is code multiplexed with a user-specific signature to selected time-frequency   resources, as described in \cite{Globecom}.  When a WI of $1$ is sent to WRx, the network expects the target mobile device to decode the PDCCH with a time offset identical to that of start-up time  ($t_{su}$).   Fig. \ref{fig:drxvswrx} (a) and (b) depict the basic operation and representative power consumption behavior of the conventional DRX-enabled cellular module and that of the cellular module with WRx, respectively, at a conceptual level. As illustrated, the WuS eliminates  the unnecessarily wasted energy in the first and second DRX cycles, while also reducing the buffering delay compared to DRX. {Due to the specifically-designed narrowband signal structure of WuS, the WRx power consumption ($\text{PW}_{\text{1}}$) is much lower than that of BBU active, either due to packet decoding  ($\text{PW}_{\text{2}}$) or when inactivity timer is running ($\text{PW}_{\text{3}}$) \cite{Globecom}. The lowest power consumption is obtained in sleep state ($\text{PW}_{\text{4}}$). We consider that during the BBU active states, the power consumption due to packet decoding is larger than or equal to that of running inactivity timer (i.e., $\text{PW}_{\text{2}}\geq \text{PW}_{\text{3}})$, but both are fixed. For presentation purposes, we denote the ratio of such power consumption at BBU active states     as $\phi=\frac{\text{PW}_2}{\text{PW}_3}$, where $\phi\geq 1$.}

In general, because NR supports wide bandwidth operation, packets can be served in a very short time duration. In addition, in case the user packet sizes are small, packet concatenation in NR is used, so that all buffered packets in a relatively short wake-up cycle can be served in a single transmission time interval (TTI). Accordingly, we assume that radio-link control entity (located at the gNB)  concatenates all those packets arriving during the sleep state, and  as soon as the BBU is triggered on, the device (UE) can receive and decode the  concatenated packets for a service time of $t_s$, which equals to a single TTI. During the serving time, if there was a new packet arrival, the BBU starts serving the corresponding packet by the end of $t_s$. In case that there was no packet arrival by the end of $t_s$, the UE initiates its inactivity timer with a duration of $t_i$. After the inactivity timer is initiated, and if a new PDCCH message is received before the time expiration, the BBU enters the active-decoding state and serves the packet. However, if there is no PDCCH message received  before the expiration of the inactivity timer, a sleep period starts, the WRx-enabled cellular module switches to sleep state, and WRx operates according to its {wake-up cycle} \cite{Globecom}. For reference, in case of DRX, the BBU sleeps according to its short and long DRX patterns \cite{pm2,koc}.

The introduction of a PDWCH has two fundamental consequences, namely misdetections and false alarms \cite{Globecom}. In the  latter case, WRx wakes up in a predefined time instant,  and erroneously decodes a WI of $0$ as $1$, leading to  unnecessary BBU power consumption. The former, in turn, corresponds to the case where a WI of $1$  is sent, but WRx  decodes it incorrectly as $0$. Such    misdetection   adds an extra delay and wastes radio resources. {We denote the probability of  misdetection and the probability of false alarm as $P_{md}$ and $P_{fa}$, respectively.} The requirements for the probability of  misdetection  of PDWCH are eventually stricter than those of the probability of false alarm \cite{Globecom}. 

\begin{figure}[!t]
\centering
\includegraphics[scale=1.1]{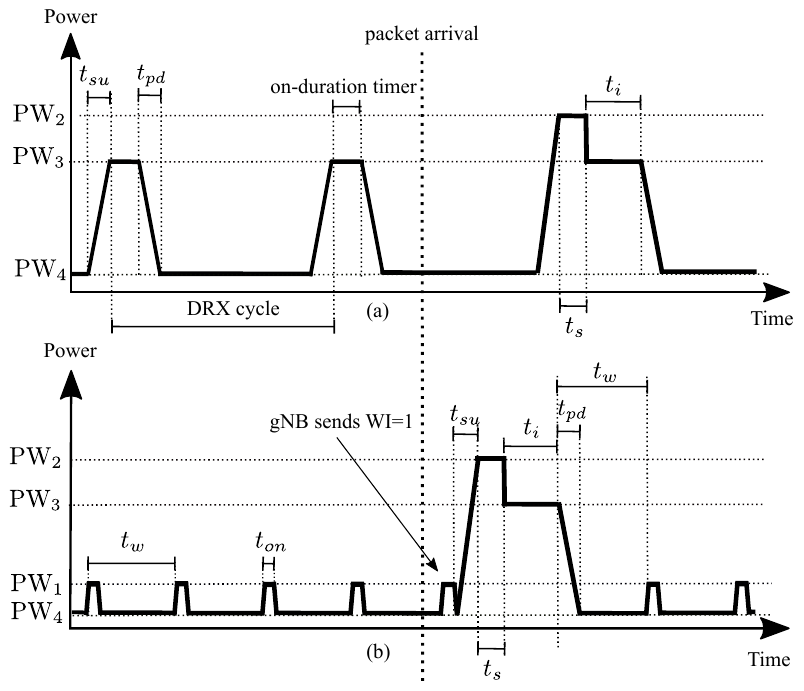}
\caption{{Power consumption profiles of (a) a typical DRX mechanism, and (b) a WRx-enabled cellular module.}  }
\label{fig:drxvswrx}
\end{figure}

One of the new features of 5G NR networks to reach their aggressive  requirements  is   latency-optimized frame structure with flexible numerology, providing  subcarrier spacings ranging from $15$ kHz up to $240$ kHz with a proportional change in cyclic prefix, symbol length, and slot duration  \cite{TS38300}.  Regardless of the numerology used, the length of one radio frame is fixed to $10$ ms and the length of one subframe is fixed to $1$ ms~\cite{nr2017}, as in 4G LTE/LTE-Advanced. However, in NR, the number of slots per subframe varies according to the numerology that is configured. Additionally, in order to support further reduced latencies, the concept of   mini-slot transmission is introduced in NR, and hence the TTI varies depending on the service type ranging from one symbol, to one slot, and to multiple slots~\cite{nr2017}.  In this work, in order to provide consistent and exact timing definitions, different time intervals of the wake-up related procedures  are defined as integer multiples of a TTI.  {Additionally, according to 3GPP,  the packet service time ($t_s$)   is one TTI, in which multiple packets can be concatenated.}  Furthermore, for the sake of clarity, a TTI duration of $1$ ms is taken   as the baseline system assumption for the WuS configurations, which then facilitates applying the proposed concepts also in future evolution of LTE-based systems.  

{Finally, it is important to note that from a system-level point of view, the configurable parameters of the WuS are the wake-up cycle ($t_w$) and the inactivity timer ($t_i$), whose values we will optimize in Section~\ref{sec:prob}. The remaining parameters ($t_{on}$, $t_{pd}$, $t_{su}$, $t_{s}$) depend on physical constraints and signal design, and accordingly we assume them to be fixed, i.e., the optimization will be done for fixed (given) values of $t_{on}$, $t_{pd}$, $t_{su}$ and $t_{s}$.}
\section{State Machine based Wake-up System Model}
\label{sec:sysmodel}
For mathematical convenience, the performance of the wake-up based system is studied and analyzed in the context of a  Poisson arrival process with a packet arrival rate of $\lambda$ packets per TTI. In the Poisson traffic model, each packet service session consists of a sequence of packets with exponentially distributed  inter-packet arrival  time ($t_p$) \cite{pm2}. 

{The power states of WuS are modeled as a semi-Markov process with four different states that correspond to WRx-ON (state $\text{S}_1$), active-decoding (state $\text{S}_2$), active-inactivity timer (state $\text{S}_3$),  and sleep (state $\text{S}_4$), as shown in Fig. \ref{fig:markov}.  At S$_1$, WRx monitors PDWCH, and if WRx receives WI=$0$, UE transfers  to    S$_4$, otherwise  (WI=$1$)  it transfers to  S$_2$.    At S$_2$, UE   decodes the packets for a fixed duration of $t_s$; if the device is scheduled before the end of $t_s$, it  starts decoding the   new packet, and remains at S$_2$, otherwise the device transfers to S$_3$.  At S$_3$, $t_i$ is running, and if the device is scheduled before the expiry of $t_i$, it enters in  S$_2$, otherwise the device transfers to S$_4$.  At S$_4$, the device   is in sleep state, and cannot receive any signal, as opposed to being fully-functional at S$_2$ and S$_3$.  Moreover,  at the end of a wake-up cycle in sleep period, the UE moves  to S$_1$.} {As noted already in the previous Section~\ref{sec:Background}, each state is associated with a different power consumption level, PW$_k$,   $k \in\{1,2,3, 4\}$.} 
  
\begin{figure}[t!]
\centering
\includegraphics[scale=1.6]{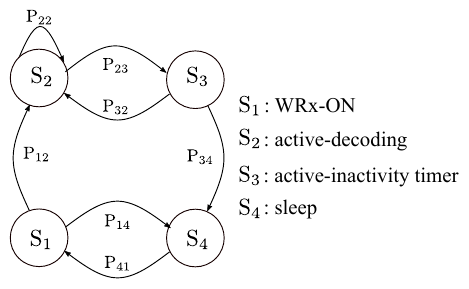} 
\caption{{Semi-Markov process for the state transitions of the wake-up scheme, with states $\text{S}_1$, $\text{S}_2$, $\text{S}_3$ and $\text{S}_4$.}}
\label{fig:markov}
\end{figure}

\textbf{Transition probabilities}: The transition probability from UE state   ${\text{S}_{k}}$ to ${\text{S}_{l}}$  (${\text{P}_{kl}}$) is defined as 
\begin{equation}
{\text{P}_{kl}}=\lim_{n\to\infty}  \Pr(S(\tau_n)={\text{S}_{l}} | S(\tau_{n-1})={\text{S}_{k}}),
\end{equation}
\noindent where $S(\tau_n)$ is the UE state at the $\tau_n$ jump time\footnote{{In a semi-Markov process, $S(t)$ is a stochastic process with a finite set of states (${\text{S}_1}, \dots, \text{S}_4$ in our case), having step-wise trajectories with jumps at times $0<\tau_1<\tau_2...$, and its values at the jump times ($S(\tau_n)$) form a Markov chain.
}}.

{When  the UE is at  ${\text{S}_{1}}$, it moves to ${\text{S}_{2}}$ either because of    false alarm or correct detection; otherwise, it moves to ${\text{S}_{4}}$. Accordingly,  ${\text{P}_{12}}$ and ${\text{P}_{14}}$ can be expressed as
  \begin{equation}
  \begin{split}
      {\text{P}_{12}}=\Pr[t_{p}>t_{w}|\text{S}_1]P_{fa}+
\Pr[t_{p}\leq t_{w}|\text{S}_1](1-P_{md})  \\
 =e^{-\lambda t_w}P_{fa}+(1-e^{-\lambda t_w})(1-P_{md}),~~~~~\,
  \end{split}
\end{equation}
\noindent and
\begin{equation}
{\text{P}_{14}}=1-{\text{P}_{12}}.
\end{equation}}
{When the  UE  is at ${\text{S}_{2}}$, it decodes the packet for a duration of $t_s$, and  if the next packet is received before the end of the current service time, the UE starts decoding the new packet at the end of service time; otherwise, it moves to ${\text{S}_{3}}$. Therefore,  ${\text{P}_{22}}$ and ${\text{P}_{23}}$ can be obtained as
 \begin{equation}
{\text{P}_{22}}=\Pr[t_{p}\leq t_s| \text{S}_2]=1-e^{-\lambda t_s},
\end{equation}
\noindent and
 \begin{equation}
{\text{P}_{23}}=1-{\text{P}_{22}}. 
\end{equation}}
{When the  UE  is at ${\text{S}_{3}}$, it moves to ${\text{S}_{2}}$ if the next packet is received before the expiry of  $t_i$; otherwise, it moves to ${\text{S}_{4}}$. Therefore,  ${\text{P}_{32}}$ and ${\text{P}_{34}}$ can be expressed as
 \begin{equation}
{\text{P}_{32}}=\Pr[t_{p}\leq t_i| \text{S}_3]=1-e^{-\lambda t_i},
\end{equation}
\noindent and
 \begin{equation}
{\text{P}_{34}}=1-{\text{P}_{32}}. 
\end{equation}}
{Finally, at the end of every sleep cycle, the UE decodes PDWCH, and therefore,
\begin{equation}
{\text{P}_{41}}=1.
\end{equation}} 

{\textbf{Steady state probabilities}: The steady state probability that  the UE is at state    ${\text{S}_{k}}$  ($\text{P}_k$) is defined as 
\begin{equation}
{\text{P}_{k}}=\lim_{n\to\infty}  \Pr( S(\tau_n) = \text{S}_k ).
\end{equation}}
{By utilizing the set of balance equations ($\text{P}_{k}=\sum_{l=1}^{4}\text{P}_{l}\text{P}_{lk}$) and the basic sum of probabilities ($\sum_{k=1}^{4}\text{P}_{k}=1$), the $\text{P}_k$'s can be obtained as follows }
{\begin{equation}
\text{P}_{1}=\text{P}_{4}=\frac{\text{P}_{34}\text{P}_{23}}{ 2\text{P}_{34}\text{P}_{23} +\text{P}_{12}(1+\text{P}_{23}) } ,
\end{equation}
\begin{equation}
\text{P}_{2}=\text{P}_{1}\frac{\text{P}_{12}}{\text{P}_{23}\text{P}_{34}},
\end{equation}
\begin{equation}
\text{P}_{3}=\text{P}_{1}\frac{\text{P}_{12}{}}{\text{P}_{34}}.
\end{equation} }

  {\textbf{Holding times}:  The corresponding   holding time for state ${\text{S}_{k}}$ is denoted by  $ \omega_k$,    $k \in\{1,2,3, 4\}$. The holding times for $\omega_1$, $\omega_2$ and $\omega_4$ are constant and given by: $\mathbb{E}[\omega_1]= t_{on} $,   $\mathbb{E}[\omega_2]= t_{s} $ and $\mathbb{E}[\omega_4]=t_{w}-t_{on}$.   However, $\omega_3$ is dependent on the  inter-packet arrival time ($t_p$).   If   a packet arrives before $t_i$, $\omega_3$ is equal to the  inter-packet arrival time, otherwise  $\omega_3$ equals to $t_i$. Therefore, $\omega_3$ can be calculated as a function of $t_p$ as }
\begin{equation}
\omega_3(t_p)= \left\{ \,
\begin{IEEEeqnarraybox}[][c]{l?s}
\IEEEstrut
t_p,&$\text{for}~~ t_p\leq t_i$ , \\
t_i,&$\text{for}~~ t_p>t_i $ .
\IEEEstrut
\end{IEEEeqnarraybox}
\right.
\end{equation}
\noindent Hence, $\mathbb{E}[\omega_3]$ can be  expressed as 
   \begin{equation}
\mathbb{E}[\omega_3]=\int_{0}^{\infty}\omega_3(t)    f_{p}(t)dt=\frac{1-e^{-\lambda t_i }}{\lambda},
\end{equation}
   \noindent where  $f_{p}(t)=\lambda e^{-\lambda t}$ is the probability density function    of the  exponentially distributed packet   arrival time.
   
\begin{figure*}[!t]
\normalsize
\setcounter{equation}{14}
\begin{equation}
\overline{\text{P}}_c=\frac{0.5\text{P}_{1}\text{P}_{12}t_{su}(\text{PW}_2-\text{PW}_4)+0.5\text{P}_{3}\text{P}_{34}t_{pd}(\text{PW}_3-\text{PW}_4)+\sum_{n=1}^{4}\text{P}_{n}\mathbb{E}[\omega_n]\text{PW}_{n}}{\text{P}_{1}\text{P}_{12}t_{su}+\text{P}_{3}\text{P}_{34}t_{pd}+\sum_{n=1}^{4}\text{P}_{n}\mathbb{E}[\omega_n]}, \label{eq:pc_row}
\end{equation}
\hrulefill
\end{figure*}     

\subsection{Average Power Consumption}
The  average power consumption of  the UE, denoted by $\overline{\text{P}}_c$,   can be calculated as the ratio of the average energy consumption and the corresponding overall observation period. It is given by Eq.~\eqref{eq:pc_row} at the top of the next page,
 where $t_{su}$ and $t_{pd}$ correspond to the length of the start-up and power-down stages (transition times), respectively. The corresponding average energy consumption of transitions between  states are calculated as the areas under the power profiles of start-up and power-down stages, see Fig.~\ref{fig:drxvswrx}, whose contribution to the average energy consumption is multiplied by its probability of occurrence ($\text{P}_{1}\text{P}_{12}$ and $\text{P}_{3}\text{P}_{34}$, respectively), thus leading to $0.5\text{P}_{1}\text{P}_{12}t_{su}(\text{PW}_2-\text{PW}_4)$ and $0.5\text{P}_{3}\text{P}_{34}t_{pd}(\text{PW}_3-\text{PW}_4)$, respectively. 

\setcounter{equation}{15}

For modeling simplicity, we assume  that $t_{on}\approx0$, $\text{PW}_{4}\approx0$, $P_{fa}\approx0$, and $P_{md}\approx0$.    Therefore,  Eq. (\ref{eq:pc_row}) can be expanded as  a multivariate function of $t_w$ and $t_i$, {denoted by $\overline{\text{P}}_c(t_w,t_i)$,}  as follows
 {\begin{equation}
  \label{eq:pc_row_2}
  \begin{split}
\overline{\text{P}}_c(t_w,t_i)=\\ \text{PW}_3 \frac{e^{\lambda t_i}(\phi t_s e^{\lambda t_s}+\frac{1}{\lambda})+\frac{1}{2}(\phi t_{su}+t_{pd})-\frac{{1}}{\lambda}}{ e^{\lambda t_i}(t_s e^{\lambda t_s}+\frac{1}{\lambda})+\frac{t_w}{1-e^{-\lambda t_w}}+t_{su}+t_{pd}-\frac{{1}}{\lambda}}.
\end{split}
\end{equation}}

In order to provide more insight into $\overline{\text{P}}_c(t_w,t_i)$,  the instantaneous rate of change  of the power consumption with respect to both $t_w$ and $t_i$ is calculated next. Assuming continuous variables, the partial derivatives of $\overline{\text{P}}_c(t_w,t_i)$ with respect to $t_w$ and $t_i$ are
given by 
{\begin{equation}
  \label{eq:pc_der_t_i}
  \begin{split}
\frac{\partial\overline{\text{P}}_c(t_w,t_i)}{\partial t_i}=\text{PW}_3e^{\lambda t_i}(\lambda\phi t_s e^{\lambda t_s}+1)\times\\\frac{\frac{1}{2}((2-\phi)t_{su}+t_{pd})+\frac{t_w}{1-e^{-\lambda t_w}}}{\big(e^{\lambda t_i}(t_s e^{\lambda t_s}+\frac{1}{\lambda})+\frac{t_w}{1-e^{-\lambda t_w}}+t_{su}+t_{pd}-\frac{{1}}{\lambda}\big)^2}, 
\end{split}
\end{equation}}
\noindent and
{\begin{equation}
  \label{eq:pc_der_t_w}
  \begin{split}
\frac{\partial\overline{\text{P}}_c(t_w,t_i)}{\partial t_w}=\text{PW}_3\big((1+\lambda t_w)e^{-\lambda t_w}-1\big) \times \\
\frac{\big(e^{\lambda t_i}(\phi t_s e^{\lambda t_s}+\frac{1}{\lambda})+\frac{1}{2}(\phi t_{su}+t_{pd})-\frac{{1}}{\lambda}\big)}{\big((1-e^{-\lambda t_w})(e^{\lambda t_i}(t_s e^{\lambda t_s}+\frac{1}{\lambda})+t_{su}+t_{pd}-\frac{{1}}{\lambda})+t_w\big)^2}. 
\end{split}
\end{equation}}

It can be seen from Eq. \eqref{eq:pc_der_t_i} that $\frac{\partial\overline{\text{P}}_c(t_w,t_i)}{\partial t_i}>0$ for all feasible values of $t_w$ and $t_i$. From Eq. (\ref{eq:pc_der_t_w}), we can conclude that $\frac{\partial\overline{\text{P}}_c(t_w,t_i)}{\partial t_w}<0$ due to fact that $(1+\lambda t_w)e^{-\lambda t_w}<1$.

Therefore, the average power consumption
$\overline{\text{P}}_c(t_w,t_i)$ is a  strictly increasing function with respect to $t_i$ at $t_i\geq 0$, and it is a  strictly decreasing function with respect to $t_w$ at $t_w\geq 0$.   As   expected, increasing the wake-up cycle $t_w$ for a fixed $t_i$ can reduce the power consumption. However, by increasing $t_i$ for a fixed $t_w$, the power consumption increases.

 \subsection{Average Buffering Delay}
 We next assume that packets arriving during   ${\text{S}_{4}}$  are  buffered at the gNB until   the UE enters ${\text{S}_{2}}$, thus  causing  buffering delay. Without loss of generality, we assume that the radio access network experiences  unsaturated  traffic conditions. {Therefore, all packets that arrive   are served  without any further scheduling delay. }Furthermore, to simplify the delay modeling, we omit the buffering delay caused by packets arriving on ${\text{S}_{1}}$ or at the start-up state of the modem. This is because the additional buffering delay of such packet arrivals is anyway very small ($t_{on}+t_{su}$). Additionally, thanks to the adoption of the WuS seeking to reduce unnecessary start-ups, the number of occurrences of such scenarios is low. 
 
{Due to slot-based frame    structure of NR, where PDCCH is sent at the beginning of the TTI, inherently, all  packet arrivals (regardless    WuS is utilized or not) suffer from small buffering delay. Since Poisson arrivals are independently and uniformly distributed on any short interval, we assume that the arrival instant of the  packet is uniformly distributed within the TTI, and hence an average extra delay of $t_s/2$ will be introduced.}

\begin{figure}[!t]
\centering
\includegraphics[scale=1.1]{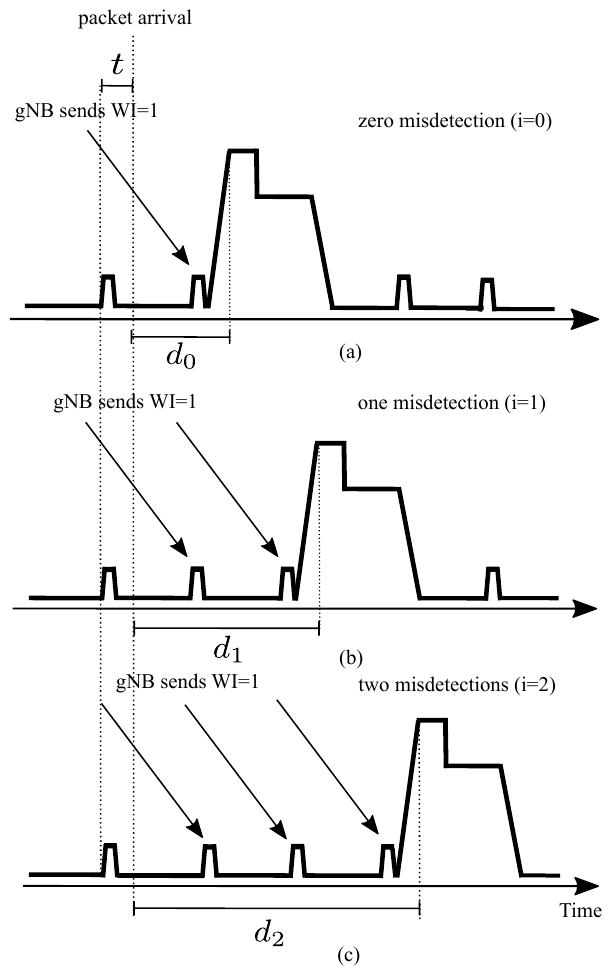}
\caption{Buffering delay caused by wake-up scheme when (a) there is no misdetection,  (b) there is a single misdetection, and (c) there are two   consecutive misdetections. }
\label{fig:delay_md}
\end{figure}

Now, as already briefly mentioned in Section \ref{sec:Background}, misdetections can in general increase the buffering delay. For this purpose, Fig. \ref{fig:delay_md} illustrates  the buffering delay experienced by the UE with no misdetections, with a single misdetection, and with two consecutive misdetections. The number of consecutive misdetections and the   corresponding buffering delay are referred to as $i$ and  $d_i$, respectively, and their dependence on $t$ can be written as $d_i(t)=(i+1)t_w+t_{su}+t_{on}-t$, for $i\in \{0,1,...\}$.  Therefore, the average buffering delay, denoted by $\overline{\text{D}}$, can be expressed as
\begin{eqnarray}
\overline{\text{D}}= \text{P}_4 \sum_{i=0}^{\infty}(1-P_{md}) { (P_{md} )}^{ i}\int_{0}^{{t_{w}}}f_p(t)d_i(t)dt+\frac{t_s}{2}.
\label{eq:d_equation}
\end{eqnarray} 

\begin{figure*}[!t]
\normalsize
\setcounter{equation}{20}
\begin{equation}
  \label{eq:delay_der_t_w}
\frac{\partial\overline{\text{D}}(t_w,t_i)}{\partial t_w}=\frac{e^{\lambda t_i}(1+e^{\lambda t_s})\big(1-(1+\lambda t_w)e^{-\lambda t_w}\big)+2(1-e^{-\lambda t_w})+2\lambda t_{su}e^{-\lambda t_w}}{\Big({2+(1-e^{-\lambda t_w})(1+e^{\lambda t_s})e^{\lambda t_i}}\Big)^2} ,
\end{equation}  
  \begin{equation}
  \label{eq:delay_der_t_i}
\frac{\partial\overline{\text{D}}(t_w,t_i)}{\partial t_i}=\frac{-\lambda e^{\lambda t_i}(1+e^{\lambda t_s})(1-e^{-\lambda t_w})\big(t_w+(t_{su}-\frac{1}{\lambda}){(1-e^{-\lambda t_w})}\big)}{\Big({2+(1-e^{-\lambda t_w})(1+e^{\lambda t_s})e^{\lambda t_i}}\Big)^2} .
\end{equation}
\hrulefill
\end{figure*}     

\setcounter{equation}{19}

Furthermore, due to the small value of misdetection probability ($P_{md}\approx 0$), the contribution of multiple consecutive misdetections to average buffering delay is small. Thus, the average buffering delay for $P_{md}\approx 0$ can be  expanded and solved as a multivariate function of $t_w$ and $t_i$, $\overline{\text{D}}(t_w,t_i)$, as follows
\begin{equation}
\begin{split}
\overline{\text{D}}(t_w,t_i)= \text{P}_4 \int_{0}^{{t_{w}}}f_p(t)d_0(t)dt+\frac{t_s}{2}=\\\frac{t_w+(t_{su}-\frac{1}{\lambda}){(1-e^{-\lambda t_w})}}{2+(1-e^{-\lambda t_w})(1+e^{\lambda t_s})e^{\lambda t_i}}+\frac{t_s}{2}.
\end{split}
\label{eq:d1_equation}
\end{equation}
{We note that $\overline{\text{D}}(t_w,t_i)$ in Eq. \eqref{eq:d1_equation} is strictly-speaking a lower bound of the delay expression in \eqref{eq:d_equation}}.

Similarly to  $\overline{\text{P}}_c(t_w,t_i)$, the partial derivatives of $\overline{\text{D}}(t_w,t_i)$ with respect to  continuous variables   $t_w$ and $t_i$ are given by Eq.~\eqref{eq:delay_der_t_w} and Eq.~\eqref{eq:delay_der_t_i}, respectively, at the top of the page. 

\setcounter{equation}{22}
From Eq. (\ref{eq:delay_der_t_w}), it can be easily concluded that $\frac{\partial\overline{\text{D}}(t_w,t_i)}{\partial t_w}>0$, due to fact that $\big(1-(1+\lambda t_w)e^{-\lambda t_w}\big)>0$. Moreover, it can be shown that  $\frac{\partial\overline{\text{D}}(t_w,t_i)}{\partial t_i}<0$ as follows 
   \begin{equation}
    \begin{split}
\lambda t_w >0\Longrightarrow \lambda t_w >1-e^{-\lambda t_w}\Longrightarrow \\
\frac{\lambda t_w}{1-e^{-\lambda t_w}} >1\Longrightarrow\frac{\lambda t_w}{1-e^{-\lambda t_w}} >1-\lambda t_{su}\Longrightarrow \\
 t_w+(t_{su}-\frac{1}{\lambda}){(1-e^{-\lambda t_w})} >0\Longrightarrow \frac{\partial\overline{\text{D}}(t_w,t_i)}{\partial t_i}<0.
 \end{split}
\end{equation}

Therefore, the average buffering delay 
$\overline{\text{D}}(t_w,t_i)$ is a  strictly increasing function with respect to $t_w$ at $t_w\geq 0$, and it is a  strictly decreasing function with respect to $t_i$ at $t_i\geq 0$. As expected, contrary to  the behavior of $\overline{\text{P}}_c(t_w,t_i)$, increasing the wake-up cycle $t_w$ for a fixed $t_i$ increases the buffering delay. On the other hand, by increasing $t_i$ for a fixed $t_w$, the buffering delay can be reduced. 

{The findings related to the impact of $t_w$ and  $t_i$ on the average delay and power consumption are intuitive while are rigorously confirmed and quantified by the presented expressions.}
\section{Optimization Problem   Formulation and Solution}
\label{sec:prob}
In this section, dual-parameter ($t_w$ and $t_i$)  
constrained  optimization
problem is formulated with the objective of  minimizing  the UE power consumption under a buffering  delay  constraint.    Specifically, the average buffering delay is constrained to be less than or equal to a predefined maximum tolerable delay or delay bound, denoted by  ${\overline{\text{D}}}_{\max}$,  whose value is set based on the service type. To this end, building on the modeling results of the previous section, the optimization problem is now formulated as follows 
\begin{eqnarray}
\mathop{\text{minimize}}_{t_w,t_i} && \overline{\text{P}}_c(t_w,t_i) \label{eq:i1}  \ \ \\ 
\text{subject to} &&  \overline{\text{D}}(t_w,t_i)\leq \overline{\text{D}}_{\max},
 \label{eq:c1} \\
 && t_{w}, t_i \in \{1,2,...\},
\label{eq:c6}
\end{eqnarray}
\noindent  {where $\overline{\text{P}}_c(t_w,t_i)$ and $\overline{\text{D}}(t_w,t_i)$ are defined in Eq.~\eqref{eq:pc_row_2} and Eq.~\eqref{eq:d1_equation}, respectively.}
 
The resulting  optimization problem in (\ref{eq:i1})-(\ref{eq:c6}) belongs to a class of intractable  {mixed-integer non-linear programming (MINLP)} problems \cite{mnlp}. In this work, the corresponding  {MINLP} is solved by using the equivalent  {non-linear programming problem}  with continuous variables, expressed below in (\ref{eq:i2})-(\ref{eq:c7}), which is obtained by means of relaxing the second constraint  (\ref{eq:c6}) into a continuous constraint  (see Eq. (\ref{eq:c7})), assuming that both parameters  are positive real numbers larger than or equal to one (i.e., the minimum TTI unit).  The relaxed optimization problem can be expressed as 
\begin{eqnarray}
\mathop{\text{minimize}}_{t_w,t_i} && \overline{\text{P}}_c(t_w,t_i) \label{eq:i2}  \ \ \\ 
\text{subject to} &&  \overline{\text{D}}(t_w,t_i)\leq \overline{\text{D}}_{\max},
 \label{eq:c2} \\
&& t_{w} \geq 1 , t_i \geq 1.
\label{eq:c7}
\end{eqnarray}
In general, the optimization problem in \eqref{eq:i2}-\eqref{eq:c7} is not {jointly convex in} $t_w$ and $t_i$. Therefore, finding the global optimum is a challenging task. However, in the next subsections, we exploit the increasing/decreasing properties of the power consumption and delay expressions that we have derived in   Section \ref{sec:sysmodel}, in order to derive additional properties of the problem   that will allow us to find the optimal solution in closed form.

 \subsection{Unbounded  Feasible Region}
 \label{sec:feasib}
In this section, a schematic approach is used to illustrate the feasible region for the relaxed optimization problem in (\ref{eq:i2})-(\ref{eq:c7}) and then  the feasible region is narrowed down  to the boundary of the delay constraint, whose points  are proved to remain candidate solutions while the other feasible solutions are henceforth excluded.

{Fig. \ref{fig:she_proof} (a) and (b)} show the increasing trend of the power consumption and the decreasing behaviour of the delay constraint as a function of $t_i$, while $t_w$ is fixed at $t_{w_0}$, i.e.  $\frac{\partial\overline{\text{P}}_c(t_w,t_i)}{\partial t_i}>0$ and $\frac{\partial\overline{\text{D}}(t_w,t_i)}{\partial t_i}<0$ (as proved in Section \ref{sec:sysmodel}). Let us consider  an arbitrary point $A$ in the interior of the feasible region ($t_{i_A} > t_{i_m}$, where  $\overline{\text{D}}(t_{w_0},t_{i_m})=\overline{\text{D}}_{\max}$). As it can be seen from {Fig. \ref{fig:she_proof} (a) and (b)}, there is always a point on the boundary of the delay constraint, referred to as $B$ ($t_{i_B}=t_{i_m}$), where its power consumption  $\overline{\text{P}}_{c_B}$ is lower than that of $A$ ($\overline{\text{P}}_{c_B}<\overline{\text{P}}_{c_A}$). Hence, we can conclude that, for any fixed $t_w$, under a given delay constraint, there is a point on the boundary that attains the lowest power consumption.

Similarly,  {Fig. \ref{fig:she_proof}  (c) and (d)} show the decreasing trend of the power consumption and increasing behaviour of the delay constraint as a function of $t_w$, while $t_i$ is fixed at $t_{i_0}$, i.e.  $\frac{\partial\overline{\text{P}}_c(t_w,t_i)}{\partial t_w}<0$ and $\frac{\partial\overline{\text{D}}(t_w,t_i)}{\partial t_w}>0$ (as proved in Section \ref{sec:sysmodel}).  Consider  an arbitrary point $C$ in the interior of the feasible region ($t_{w_C} < t_{w_m}$  where  $\overline{\text{D}}(t_{w_m},t_{i_0})=\overline{\text{D}}_{\max}$). As it can be seen from {Fig. \ref{fig:she_proof} (c) and (d)}, there is always a point on the boundary of the delay constraint, referred to as $D$ ($t_{w_D}=t_{w_m}$), where its power consumption  $\overline{\text{P}}_{c_D}$ is lower than that of $C$ ($\overline{\text{P}}_{c_D}<\overline{\text{P}}_{c_C}$). Then, we can conclude that, for any fixed $t_i$, under a given delay constraint, there is a point on the boundary that attains the lowest power consumption.

Therefore, because for both scenarios (fixed  $t_w$  and fixed  $t_i$), the lowest power consumption occurs at the boundary of the delay constraint, we can conclude that the optimal point cannot be located in the interior of the feasible region, but rather it lies over the boundary. That is, any  arbitrary point $(t_w,t_i)$ in the feasible region  of the relaxed optimization problem in  (\ref{eq:i2})-(\ref{eq:c7}) cannot be an optimal point, unless it lies on the boundary (rather than the interior) of the delay constraint, i.e., $\overline{\text{D}}(t_w,t_i)= \overline{\text{D}}_{\max}$.

\begin{figure}
\centering
\includegraphics[width=0.50  \textwidth]{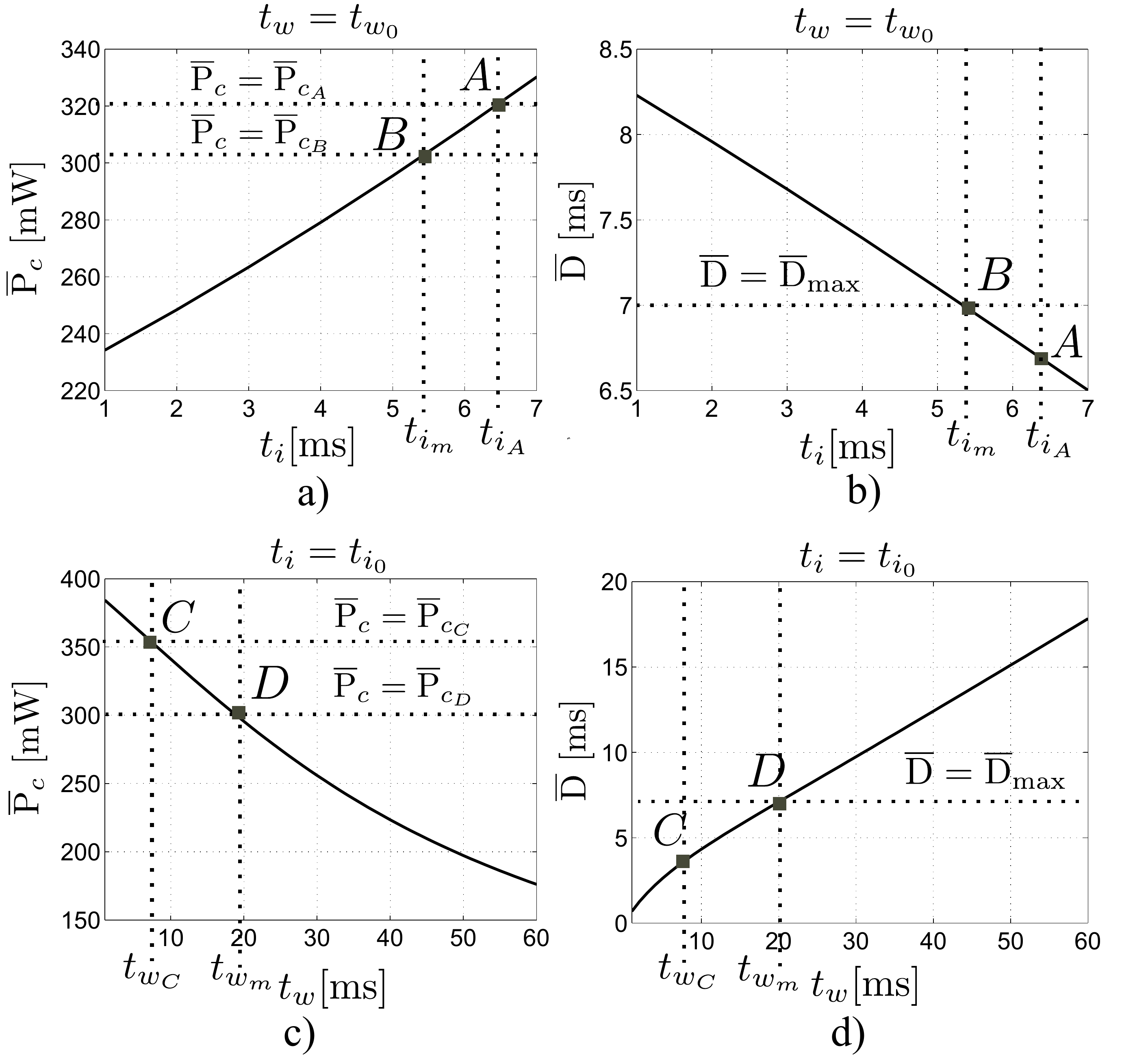}
\caption{Schematic proof that optimal point lies over the boundary.}
\label{fig:she_proof}
\end{figure}
\subsection{Power Consumption over Boundary}
\label{sec:pcob}
Next, the equation of the boundary curve (expressed through   $t_i$ as a function of $t_w$)  is derived, and then the power consumption profile of all points on the boundary  is calculated as well as formulated as a function of $t_w$ only. In particular, the boundary curve can be obtained  by finding all the solutions for which the  inequality constraint in (\ref{eq:c2}) is satisfied with equality, while the constraint in (\ref{eq:c7}) is met, i.e., 
  \begin{align}
 \overline{\text{D}}(t_w,t_i)=\overline{\text{D}}_{\max} \text{~for all~}  t_{w}, t_i \geq 1.
 \label{eq:c22}
\end{align}
 
By utilizing  Eq. (\ref{eq:d1_equation}), we can isolate $t_i$, and the boundary curve can be formulated  as follows 
 \begin{equation}
    \begin{split}
t_i(t_w)=\frac{1}{\lambda}\ln \Big(\frac{{t_w+(t_{su}-\frac{1}{\lambda}){(1-e^{-\lambda t_w})}}-2(\overline{\text{D}}_{\max}-\frac{t_s}{2})}{(\overline{\text{D}}_{\max}-\frac{t_s}{2})(1-e^{-\lambda t_w})(1+e^{\lambda t_s})}\Big) \\\text{for all~}  t_{w}\geq t_{w_b},
 \label{eq:bn}
   \end{split}
\end{equation}
\noindent where $t_{w_b}$ (see Eq. (\ref{eq:twb})) is the minimum feasible value of  $t_{w}$ over the boundary.

{By using Eq. (\ref{eq:bn}), one can show that $t_i(t_w)$ is an increasing function with respect to $t_w$ on any  feasible $t_w$ point  over the boundary of the delay constraint (i.e., $\frac{d t_i(t_w)}{d t_w} \geq 0$), as follows.} Let us use the composite function rule over (\ref{eq:bn}), so that $\frac{d t_i(t_w)}{d t_w}$ can be calculated as follows 
  \begin{equation}
 \frac{d t_i(t_w)}{d t_w} =\frac{1}{\lambda}\frac{d \ln (\text{Arg})}{d \text{Arg}}\frac{d \text{Arg}(t_w)}{d t_w},
 \label{eq:part}
   \end{equation} 
   
   \noindent where $\text{Arg}$   refers to  the  argument of the logarithm in  (\ref{eq:bn}).  Since the logarithmic  function is monotonically increasing ($\frac{d \ln(\text{Arg})}{d \text{Arg}} \geq 0$), it is sufficient to   prove that  $\text{Arg}(t_w)$  is increasing with respect to  $t_w$,   
{ \begin{equation}
\frac{d \text{Arg}(t_w)}{d t_w}=\frac{1-(1-2(\overline{\text{D}}_{\max}-\frac{t_s}{2})\lambda +\lambda t_w)e^{-\lambda t_w}}{(\overline{\text{D}}_{\max}-\frac{t_s}{2})(1+e^{\lambda t_s})(1-e^{-\lambda t_w})^2},
\end{equation}
\noindent from which we can write
   \begin{equation}
    \begin{split}
1-(1+\lambda t_w)e^{-\lambda t_w}>0\Longrightarrow \\
1-(1-2(\overline{\text{D}}_{\max}-\frac{t_s}{2})\lambda +\lambda t_w)e^{-\lambda t_w}  \geq 0 \Longrightarrow \\
 \frac{d \text{Arg}(t_w)}{d t_w} \geq 0.
 \end{split}
 \label{eq:part1}
\end{equation}}
\noindent Therefore, we can conclude that $\frac{d t_i(t_w)}{d t_w} \geq 0$.

Additionally, one can prove that  $t_i=1$ and $t_w=t_{w_b}$ (in which  $t_{w_b}$ always exists, and is larger than or equal to one) is located over the boundary, as follows.  Based on Eq. (\ref{eq:bn}), and  by induction on  $t_i=1$, we can write 
 {   \begin{equation}
    \begin{split}
e^{\lambda}=\frac{{t_{w}+(t_{su}-\frac{1}{\lambda}){(1-e^{-\lambda t_{w}})}}-2(\overline{\text{D}}_{\max}-\frac{t_s}{2})}{(\overline{\text{D}}_{\max}-\frac{t_s}{2})(1-e^{-\lambda t_{w}})(1+e^{\lambda t_s})} \Longrightarrow \\
e^{-\lambda t_{w}}=-\frac{-\lambda t_{w}    {+\big((e^{\lambda}(1+e^{\lambda t_s})+2)\overline{\text{D}}_{\max} -t_{su}\big)\lambda+1}}{\lambda t_{su}-e^{\lambda}(\overline{\text{D}}_{\max}-\frac{t_s}{2})(1+e^{\lambda t_s})\lambda-1} \Longrightarrow \\
e^{-\lambda t_{w}+F}=-e^{F} \frac{-\lambda t_{w} +F}{H}\Longrightarrow \\
{-\lambda t_{w}+F}=- \mathcal W\Big(\frac{H}{e^{F}}\Big)\Longrightarrow \\
 t_{w} = \frac{1}{\lambda} \Big(F+\mathcal W\Big(\frac{H}{e^{F}}\Big)\Big),
 \label{eq:sss}
    \end{split}
      \end{equation}}
      
\noindent where  {$F= \big((e^{\lambda}(1+e^{\lambda t_s})+2)(\overline{\text{D}}_{\max}-\frac{t_s}{2}) -t_{su}\big)\lambda+1$, $H=\lambda t_{su}-e^{\lambda}(\overline{\text{D}}_{\max}-\frac{t_s}{2})(1+e^{\lambda t_s})\lambda-1$} and $\mathcal W(x)$ is the Lambert W function \cite{Corless1996}.  For typical $\overline{\text{D}}_{\max}$ and  $t_{su}$ values,      $1 \ll \frac{F}{\lambda}$  and $H<0$, therefore, the main branch of the Lambert W function ($\mathcal W_0$) can be considered as a solution for (\ref{eq:sss}) that has a value greater than $-1$. Then,  we can conclude that   
  \begin{equation}
 \label{eq:twb}
 t_{w_b}=\frac{1}{\lambda} \Big(F+\mathcal W_0\Big(\frac{H}{e^{F}}\Big)\Big)\geq 1,
  \end{equation}
\noindent and, as a result, all feasible points on the boundary curve  can be specified and constrained  by     $t_i\geq 1$ and $t_w \geq t_{w_b}$.

The $t_{w_b}$ in \eqref{eq:twb} is the smallest feasible $t_w$ over the boundary curve because if we assume that there is a $t_w$ smaller than $t_{w_b}$, based on Eq. (\ref{eq:part}) and (\ref{eq:part1}), its corresponding $t_i$ should become  smaller than one, which belongs to the unfeasible region. Therefore, based on proof-by-contradiction, ($t_i=1$, $t_w=t_{w_b}$) lies over the corner part of the boundary. Consequently, ${t_i \geq1}$ and ${t_w \geq t_{w_b}}$ are 
equivalent  constraints of the boundary of the delay constraint. Therefore, the point ($t_i=1$, $t_w=t_{w_b}$) is an extreme point, and it is  located over the boundary curve of the delay constraint, where $t_{w_b}$ is larger than or equal to one.

Finally, by substituting the value of $e^{\lambda t_i}$  over the boundary  (argument of logarithm in  (\ref{eq:bn}))   into   (\ref{eq:pc_row_2}),  the average power consumption of all the points over the boundary, referred to as $\overline{\text{P}}_b(t_w)$,  can be obtained as follows 
 {   \begin{equation}
    \begin{split}
\overline{\text{P}}_{b}(t_w)=\text{PW}_3\frac{u_1+u_2 t_w+u_3e^{-\lambda t_w}}{w_1+w_2t_w+w_3e^{-\lambda t_w}} \text{~~~~for all~}  t_{w}\geq t_{w_b}, 
 \label{eq:xc}
   \end{split}
\end{equation}}
\noindent where
 {   \begin{align} 
 \begin{split}
u_1&=\big (\frac{1}{2}(\phi t_{su}+t_{pd})-\frac{{1}}{\lambda}\big )(1+e^{\lambda t_s})(\overline{\text{D}}_{\max}-\frac{t_s}{2})+\\ & \quad  (t_s e^{\lambda t_s}+\frac{1}{\lambda})\big(t_{su}-\frac{1}{\lambda}-2(\overline{\text{D}}_{\max}-\frac{t_s}{2})\big),\end{split}\\
\label{eq:Ab}
u_2&=t_s e^{\lambda t_s}+\frac{1}{\lambda},\\
 \begin{split}
u_3&= -\big (\frac{1}{2}(\phi t_{su}+t_{pd})-\frac{{1}}{\lambda}\big )(1+e^{\lambda t_s})(\overline{\text{D}}_{\max}-\frac{t_s}{2})-\\ & \quad  (t_s e^{\lambda t_s}+\frac{1}{\lambda})\big(t_{su}-\frac{1}{\lambda}\big), \end{split} \\ \begin{split} 
w_1&=\big (\frac{1}{2}(t_{su}+t_{pd})-\frac{{1}}{\lambda}\big )(1+e^{\lambda t_s})(\overline{\text{D}}_{\max}-\frac{t_s}{2})+\\ & \quad  (t_s e^{\lambda t_s}+\frac{1}{\lambda})\big(t_{su}-\frac{1}{\lambda}-2(\overline{\text{D}}_{\max}-\frac{t_s}{2})\big), \end{split}   \\
w_2&=t_s e^{\lambda t_s}+\frac{1}{\lambda}+(1+e^{\lambda t_s})(\overline{\text{D}}_{\max}-\frac{t_s}{2}),   \\
\begin{split}
w_3&= -\big (\frac{1}{2}(t_{su}+t_{pd})-\frac{{1}}{\lambda}\big )(1+e^{\lambda t_s})(\overline{\text{D}}_{\max}-\frac{t_s}{2})-\\ & \quad  (t_s e^{\lambda t_s}+\frac{1}{\lambda})\big(t_{su}-\frac{1}{\lambda}\big).\end{split}  
\end{align}}
In the Appendix, we further analyze the expression in Eq.~(\ref{eq:xc}) in detail.

\subsection{Optimal Parameter Values}
The power consumption  over the boundary curve in Eq. (\ref{eq:xc}) depends on the packet arrival rate $\lambda$. Furthermore, as it is shown in the Appendix,  $\overline{\text{P}}_{b}(t_w)$   behaves differently for different ranges of $\lambda$. For this purpose, $\frac{d \overline{\text{P}}_{b}(t_w)}{d t_w}$ is calculated  (a detailed analysis is provided in the Appendix). Briefly, its sign for different ranges of $\lambda$  within {the feasible region of the wake-up cycle (i.e., $t_{w_b} \leq t_w $)} can be expressed as follows 
{ \begin{equation}
\text{sgn}\left(\frac{d \overline{\text{P}}_{b}(t_w)}{d t_w}\right)= \left\{ \,
\begin{IEEEeqnarraybox}[][c]{l?s}
\IEEEstrut
1 &      $\text{for}  ~ 0< \lambda \leq {\lambda}_t$ ,  \\
-1 & $\text{for} ~ {\lambda}_t< \lambda < 1$ ,
\IEEEstrut
\end{IEEEeqnarraybox}
\right.
\label{eq:ifbwzz}
\end{equation}}

\noindent where sgn(.) refers to the sign function, and $\lambda_t$ is referred to as the turnoff packet arrival rate. {The turnoff packet arrival rate can be calculated  using any typical root-finding algorithm that meets $F_1=0$   (see details in Appendix) where}
  {   \begin{align}
  F_1&=(1+e^{\lambda t_s})(\overline{\text{D}}_{\max}-\frac{t_s}{2})(t_s e^{\lambda t_s}+\frac{{1}}{\lambda}) \times   \\
& \quad  \big(\frac{1}{2}(t_{pd}-\phi t_{su})+\frac{{1}}{\lambda}+2(\overline{\text{D}}_{\max}-\frac{t_s}{2})\big)- \\
& \quad \big (\frac{1}{2}(\phi t_{su}+t_{pd})-\frac{{1}}{\lambda}\big )(1+e^{\lambda t_s})^2(\overline{\text{D}}_{\max}-\frac{t_s}{2})^2.
  \end{align}}
\begin{theorem}
$t_w^*= t_{w_b}~ \text{and} ~ t_i^*=1$ are the optimal parameter values  of the  optimization problem in (\ref{eq:i2})-(\ref{eq:c7}) for the range  $0< \lambda \leq {\lambda}_t$. 
\end{theorem}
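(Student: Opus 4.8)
The plan is to reduce the two-dimensional relaxed problem \eqref{eq:i2}--\eqref{eq:c7} to a one-dimensional optimization along the boundary of the delay constraint, and then exploit monotonicity. By the argument in Section~\ref{sec:feasib}, any optimal pair $(t_w,t_i)$ must satisfy $\overline{\text{D}}(t_w,t_i)=\overline{\text{D}}_{\max}$, so it suffices to minimize the objective over the boundary curve. Along that curve $t_i$ is the explicit function $t_i(t_w)$ of \eqref{eq:bn}, defined for all $t_w\geq t_{w_b}$ with $t_{w_b}$ given by \eqref{eq:twb}; moreover $t_i(t_w)$ has been shown to be non-decreasing (Eqs.~\eqref{eq:part}--\eqref{eq:part1}) with $t_i(t_{w_b})=1$, so every point of the curve with $t_w\geq t_{w_b}$ automatically meets $t_i\geq 1$, and the feasible portion of the boundary is exactly $\{(t_w,t_i(t_w)):t_w\geq t_{w_b}\}$.

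Next I would substitute $t_i=t_i(t_w)$ into $\overline{\text{P}}_c$ to obtain the single-variable function $\overline{\text{P}}_b(t_w)$ of \eqref{eq:xc}. The optimization then becomes: minimize $\overline{\text{P}}_b(t_w)$ for $t_w\geq t_{w_b}$. For the range $0<\lambda\leq\lambda_t$, Eq.~\eqref{eq:ifbwzz} gives $\mathrm{sgn}\big(d\overline{\text{P}}_b(t_w)/dt_w\big)=1$ throughout $t_w\geq t_{w_b}$, i.e.\ $\overline{\text{P}}_b$ is increasing on the feasible interval, so it attains its minimum at the left endpoint $t_w^\ast=t_{w_b}$. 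The corresponding inactivity-timer value is $t_i^\ast=t_i(t_{w_b})=1$, which is precisely the extreme (corner) point identified after \eqref{eq:twb}; both satisfy \eqref{eq:c7} since $t_{w_b}\geq 1$. Because the reduction to the boundary excludes no optimal point (Section~\ref{sec:feasib}) and the boundary minimum is unique, $(t_w^\ast,t_i^\ast)=(t_{w_b},1)$ is the optimal solution of \eqref{eq:i2}--\eqref{eq:c7}.

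The hard part is not this packaging step but the monotonicity claim \eqref{eq:ifbwzz} itself: one must differentiate the rational-plus-exponential expression \eqref{eq:xc}, collect terms, and verify that for $0<\lambda\leq\lambda_t$ the numerator of $d\overline{\text{P}}_b/dt_w$ keeps a constant (positive) sign on $t_w\geq t_{w_b}$. This is where the auxiliary quantity $F_1$ and its root $\lambda_t$ enter, and it is carried out in the Appendix; the proof of the theorem proper just invokes that sign result together with the boundary-reduction lemma of Section~\ref{sec:feasib} and the corner identity $t_i(t_{w_b})=1$.
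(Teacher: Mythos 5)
Your proposal is correct and follows essentially the same route as the paper's own proof: reduce to the boundary $\overline{\text{D}}(t_w,t_i)=\overline{\text{D}}_{\max}$ via the monotonicity argument of Section~\ref{sec:feasib}, parametrize the boundary by $t_i(t_w)$ with feasible range $t_w\geq t_{w_b}$ and corner point $t_i(t_{w_b})=1$, and then invoke the sign result \eqref{eq:ifbwzz} to conclude that $\overline{\text{P}}_b$ is increasing for $0<\lambda\leq\lambda_t$, so the minimum sits at the left endpoint. Your writeup is, if anything, slightly more explicit than the paper's (which simply cites \eqref{eq:ifbwzz} and substitutes $t_{w_b}$ into \eqref{eq:bn}), and you correctly identify that the substantive work lives in the Appendix's analysis of $F_1$ and $\lambda_t$ rather than in the theorem's packaging.
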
 

 \begin{proof}
As it can be seen in (\ref{eq:ifbwzz}),   for all $0< \lambda \le {\lambda}_t$, the power consumption increases when increasing $t_w$ over the boundary, so that the minimum power consumption is achieved at the minimum feasible $t_w$, i.e.,  $t_w=t_{w_b}$. Correspondingly, the optimal value of $t_i$  can be calculated by substituting $t_{w_b}$ into Eq. (\ref{eq:bn}), which leads to    $t_i=1$. Therefore, for all $0<\lambda\leq{\lambda}_t$,  $t_w^*=t_{w_b}$ and $t_i^*=1$ is the optimal solution of the optimization problem in (\ref{eq:i2})-(\ref{eq:c7}).
\end{proof}
  
{\begin{theorem}
$t_w^*=+\infty~\text{and}~t_i^*=+\infty$ are the optimal parameter values  of the  optimization problem in (\ref{eq:i2})-(\ref{eq:c7}) for the range  ${\lambda}_t< \lambda <1$. 
\end{theorem}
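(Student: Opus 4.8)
The plan is to reduce to the single-variable analysis already done on the delay boundary and then exploit the sign information in \eqref{eq:ifbwzz}, exactly mirroring the proof of Theorem~1 but with the monotonicity reversed. First I would invoke the reduction of Section~\ref{sec:feasib}: any candidate optimum of the relaxed problem \eqref{eq:i2}-\eqref{eq:c7} must lie on the boundary $\overline{\text{D}}(t_w,t_i)=\overline{\text{D}}_{\max}$, along which the objective is the one-parameter function $\overline{\text{P}}_{b}(t_w)$ of \eqref{eq:xc}, defined for all $t_w\geq t_{w_b}$. For the regime $\lambda_t<\lambda<1$, \eqref{eq:ifbwzz} states that $\frac{d\overline{\text{P}}_{b}(t_w)}{dt_w}<0$ on the whole feasible set $t_w\geq t_{w_b}$, so $\overline{\text{P}}_{b}$ is strictly decreasing there. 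Consequently its infimum over the boundary is the limit as $t_w\to+\infty$ and is not attained at any finite $t_w$, and by the boundary reduction no interior feasible point can beat this limit either; hence the relaxed program has no finite minimizer and the minimizing direction is $t_w\to+\infty$.

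Next I would pin down the accompanying value of $t_i$. Every boundary point has the form $(t_w,t_i(t_w))$ with $t_i(t_w)$ given in closed form by \eqref{eq:bn}, and it was already established that $t_i(t_w)$ is non-decreasing in $t_w$ with $t_i(t_{w_b})=1$, so $t_i(t_w)\geq 1$ all along the ray (the argument of the logarithm in \eqref{eq:bn} stays $\geq e^{\lambda}>0$, so the expression remains well defined). Letting $t_w\to+\infty$ in \eqref{eq:bn}, the numerator of that argument grows like $t_w$ while the denominator $(\overline{\text{D}}_{\max}-\frac{t_s}{2})(1-e^{-\lambda t_w})(1+e^{\lambda t_s})$ tends to the finite positive constant $(\overline{\text{D}}_{\max}-\frac{t_s}{2})(1+e^{\lambda t_s})$, so the argument diverges and $t_i(t_w)\to+\infty$. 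Thus the minimizing sequence on the boundary satisfies $t_w\to+\infty$ and $t_i\to+\infty$, which is precisely the asserted $t_w^*=+\infty$, $t_i^*=+\infty$.

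For completeness I would also record the limiting objective value, which makes the ``turnoff'' terminology precise: dividing numerator and denominator of \eqref{eq:xc} by $t_w$ and passing to the limit gives $\overline{\text{P}}_{b}(t_w)\to\text{PW}_3\,\frac{u_2}{w_2}=\text{PW}_3\,\frac{t_s e^{\lambda t_s}+\frac{1}{\lambda}}{t_s e^{\lambda t_s}+\frac{1}{\lambda}+(1+e^{\lambda t_s})(\overline{\text{D}}_{\max}-\frac{t_s}{2})}<\text{PW}_3$, i.e. the scheme degenerates to a DRX-like operation in which the wake-up radio provides no further benefit, consistent with $\lambda_t$ being the packet arrival rate above which the WuS should effectively be switched off.

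The main obstacle I anticipate is not the monotonicity itself (handed to us by \eqref{eq:ifbwzz}) but the bookkeeping around \emph{non-attainment}: one must be careful to phrase ``$t_w^*=+\infty$, $t_i^*=+\infty$'' as the unattained infimum of the relaxed program and to verify that no \emph{interior} feasible point can do better, which follows by chaining the two reductions of Section~\ref{sec:feasib} (from any interior point move to a boundary point with no larger power, then slide along the boundary toward larger $t_w$). Stating this explicitly is what keeps the result meaningful and keeps the subsequent mapping back to the integer problem \eqref{eq:i1}-\eqref{eq:c6} well posed.
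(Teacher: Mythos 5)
Your proposal is correct and follows essentially the same route as the paper's own proof: reduce to the delay-constraint boundary via the monotonicity arguments of Section~\ref{sec:feasib}, then use the sign of $\frac{d\overline{\text{P}}_{b}(t_w)}{dt_w}$ from \eqref{eq:ifbwzz} to push $t_w$ (and hence, via \eqref{eq:bn}, $t_i$) to $+\infty$. Your explicit treatment of non-attainment, the divergence of $t_i(t_w)$, and the limiting value $\text{PW}_3\,u_2/w_2$ is more careful than the paper's two-line argument, but it is a refinement of the same idea rather than a different approach.
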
 }
{\begin{proof}
As it can be seen in (\ref{eq:ifbwzz}),   for all ${\lambda}_t< \lambda <1$, the power consumption decreases when increasing $t_w$ over the boundary, so that the minimum power consumption is achieved at the maximum feasible $t_w$, i.e.,  $t_w=+\infty$. Correspondingly, the optimal value of $t_i$  can be calculated by substituting $t_{w}$ into Eq. (\ref{eq:bn}), which is    $t_i=+\infty$. Therefore, for all ${\lambda}_t< \lambda <1$,  $t_w^*=+\infty$ and $t_i^*=+\infty$ is the optimal solution of the optimization problem in (\ref{eq:i2})-(\ref{eq:c7}).
 \end{proof}}

 {
 \begin{corollary}
For the range  ${\lambda}_t< \lambda <1$, the optimal solution  is equivalent to not utilizing WuS; the system is always at active-decoding and active-inactivity timer states (P$_{1}+$P$_{4}\approx 0$). Hence, if the energy and delay overhead of switching on/off the BBU are taken into account, the WuS is not effective anymore for high $\lambda$ values. Instead, other power saving mechanisms, such as DRX, microsleep, or pre-grant message  could be used in this regime.  
 \end{corollary}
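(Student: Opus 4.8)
The plan is to read the corollary off Theorem~2 together with the steady-state probability expressions of Section~\ref{sec:sysmodel}. By Theorem~2, for ${\lambda}_t<\lambda<1$ the optimal operating point is $t_w^*=+\infty$ and $t_i^*=+\infty$, the latter being the limit of the boundary curve $t_i(t_w)$ of Eq.~\eqref{eq:bn} as $t_w\to+\infty$ (its logarithmic argument grows linearly in $t_w$ while its denominator stays bounded away from zero, so $t_i(t_w)\to+\infty$). It therefore suffices to compute the limiting behaviour of $\text{P}_1,\dots,\text{P}_4$ as $(t_w,t_i)$ tends to $(+\infty,+\infty)$ along that curve.

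First I would record the limits of the transition probabilities under the standing assumptions $P_{fa}\approx0$, $P_{md}\approx0$: $\text{P}_{12}=1-e^{-\lambda t_w}\to1$ and $\text{P}_{34}=e^{-\lambda t_i}\to0$, while $\text{P}_{23}=e^{-\lambda t_s}$ is unchanged. Substituting these into $\text{P}_1=\text{P}_4=\text{P}_{34}\text{P}_{23}\big/\big(2\text{P}_{34}\text{P}_{23}+\text{P}_{12}(1+\text{P}_{23})\big)$, the numerator vanishes while the denominator converges to $1+e^{-\lambda t_s}>0$, so $\text{P}_1=\text{P}_4\to0$. Substituting the expression for $\text{P}_1$ into $\text{P}_2=\text{P}_1\,\text{P}_{12}/(\text{P}_{23}\text{P}_{34})$ and $\text{P}_3=\text{P}_1\,\text{P}_{12}/\text{P}_{34}$ cancels the $\text{P}_{34}$ factor and gives $\text{P}_2\to 1/(1+e^{-\lambda t_s})$ and $\text{P}_3\to e^{-\lambda t_s}/(1+e^{-\lambda t_s})$, which sum to $1$. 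This establishes $\text{P}_1+\text{P}_4\approx0$: at the optimum the UE essentially never enters the WRx-ON state $\text{S}_1$ or the sleep state $\text{S}_4$, spending all of its time in the active states $\text{S}_2$ and $\text{S}_3$ with the BBU permanently on — which is exactly the behaviour of a device that does not run the WuS at all.

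The remaining assertions I would argue qualitatively rather than formally. Since the BBU stays active throughout, the two on/off transition contributions $0.5\,\text{P}_1\text{P}_{12}t_{su}(\text{PW}_2-\text{PW}_4)$ and $0.5\,\text{P}_3\text{P}_{34}t_{pd}(\text{PW}_3-\text{PW}_4)$ appearing in Eq.~\eqref{eq:pc_row}, together with the $t_{su}$ latency incurred on every wake-up, become pure overhead relative to simply leaving the BBU on; hence for $\lambda>{\lambda}_t$ the WuS cannot outperform a scheme that does not run it, and a different power-saving mechanism (DRX, microsleep, pre-grant) should be adopted instead.

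I do not expect a genuine technical obstacle: the corollary is a limiting-case reading of Theorem~2. The only point requiring a little care is that $t_w$ and $t_i$ do not diverge independently — the optimum is attained along the delay-constraint boundary $t_i=t_i(t_w)$ — so one must first confirm via Eq.~\eqref{eq:bn} that $t_i(t_w)\to+\infty$ as $t_w\to+\infty$, after which the single-variable limits of the $\text{P}_k$'s above follow immediately.
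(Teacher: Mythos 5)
Your argument is correct and follows essentially the same route as the paper, which states this corollary without a separate proof and justifies it in the surrounding discussion exactly as you do: along the Theorem~2 optimum $t_w^*,t_i^*\to+\infty$ one has $\text{P}_{12}\to 1$ and $\text{P}_{34}=e^{-\lambda t_i}\to 0$, whence $\text{P}_1=\text{P}_4\to 0$ and $\text{P}_2+\text{P}_3\to 1$ in the embedded chain, with the on/off transition terms of Eq.~\eqref{eq:pc_row} then read as pure overhead. Your explicit limit computation of the steady-state probabilities along the boundary curve $t_i(t_w)$ is simply a more rigorous rendering of that same reasoning.
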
}

{Fig.  \ref{fig:lambda_t} (a) and (b)} illustrate how ${\lambda}_t$  changes with $\overline{\text{D}}_{\max} $ and $t_{su}+t_{pd}$, respectively. As it can be observed in {Fig. \ref{fig:lambda_t} (a)}, the turnoff packet arrival rate is independent and insensitive  to variations of the value of  $\overline{\text{D}}_{\max}$, however, it reduces, when   $t_{su}+t_{pd}$ becomes larger (see {Fig. \ref{fig:lambda_t} (b)}).  Therefore, in order to decide whether to enable WuS or not, regardless of the QoS requirement of the considered traffic, the network needs to compare the estimated packet arrival rate with pre-calculated and fixed ${\lambda}_t$.

\begin{figure}
\centering
\includegraphics[width=0.5  \textwidth]{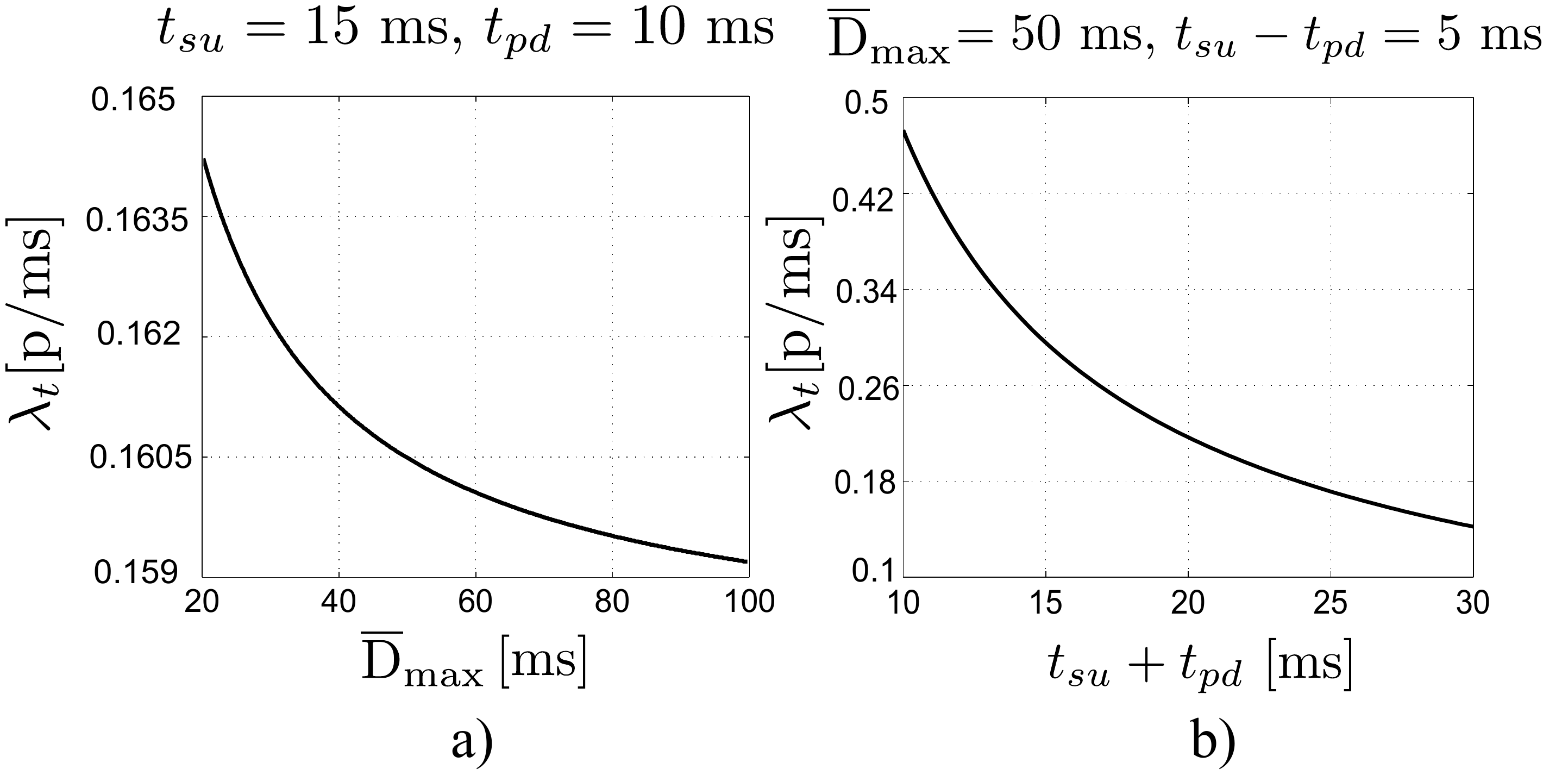}
\caption{{${\lambda}_t$  as function of delay bound and sum of transition times.}}
\label{fig:lambda_t}
\end{figure}

Interestingly,  for $ {\lambda}_t< \lambda < 1$,  the power consumption reduces by increasing $t_w$ {towards infinity (and correspondingly, $t_i$ increases), while the delay constraint is satisfied. This can be interpreted in a way that for packet arrival rates higher than ${\lambda}_t$, the WuS is not effective  anymore and only adds overhead energy consumption, thus implying that it is better not to switch off the BBU and to utilize  short DRX cycles. As it is shown in {Fig. \ref{fig:lambda_t} (b)}, when   $t_{su}+t_{pd}$ becomes larger, turnoff packet arrival rates become smaller, which justifies the fact that for the higher packet arrival rates, the frequent start-up and power-down related energy consumption becomes larger.}

{Additionally, the main reason for such interpretation is due to the fact that for large wake-up cycles, P$_{12}$ approaches to one, which is equivalent to the reduction of  number of potential scheduled PDCCHs, and hence there is no  gain by using the wake-up scheme over DRX anymore. Furthermore,  for higher packet arrival rates, based on the optimal policy, most of the time the BBU is at either S$_2$ or S$_3$ (P$_{1}=$P$_{4}\approx 0$), to avoid wasted energy of start-up and power-down times, and to satisfy the delay constraint (illustrated in Fig. \ref{fig:small_large_lambda} (a)).    As it can be seen in    Fig.  \ref{fig:small_large_lambda} (a), for small packet arrival rates, there is considerable energy consumption for transition between states, however once the packet arrival rate is higher than $\lambda_t$, this trend changes and the UE is mainly at S$_2$ or S$_3$, and it does not waste energy in start-up/power-down stages, due to the need for frequent start-up/power-down of the BBU. Such change in behaviour of the wake-up scheme can be explained by the objective of the system which is  to reduce  the overall power consumption, as it shown in Fig. \ref{fig:small_large_lambda} (b). Moreover, as it can be seen in Fig.  \ref{fig:small_large_lambda} (a), for packet arrival rates higher than the turnoff packet arrival rate, most of the energy is consumed for decoding of the packets, and its energy consumption increases linearly with the packet arrival rate.}      
\begin{figure}[!t]
\centering \
\includegraphics[width=0.34 \textwidth]{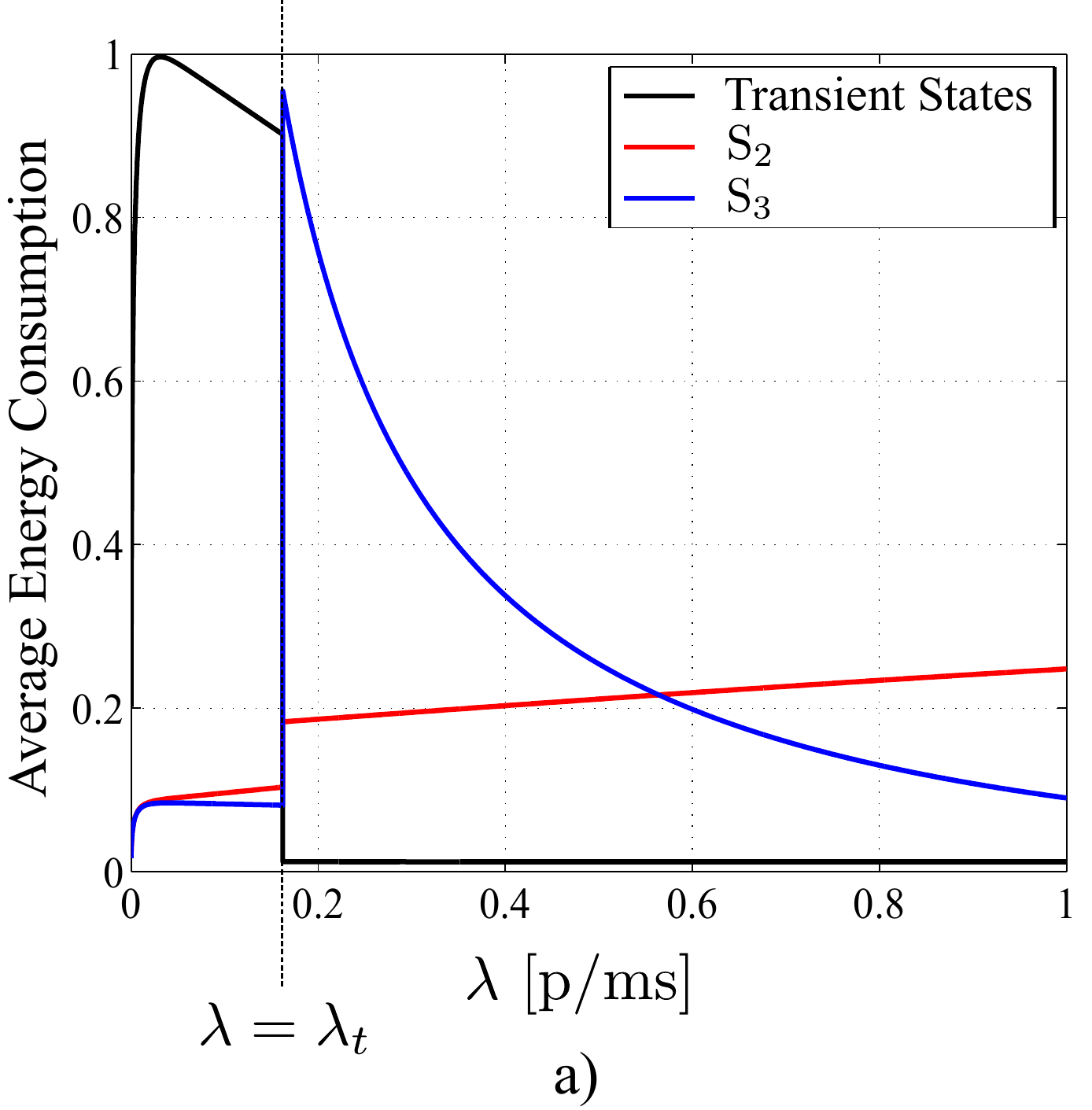}  \includegraphics[width=0.35 \textwidth]{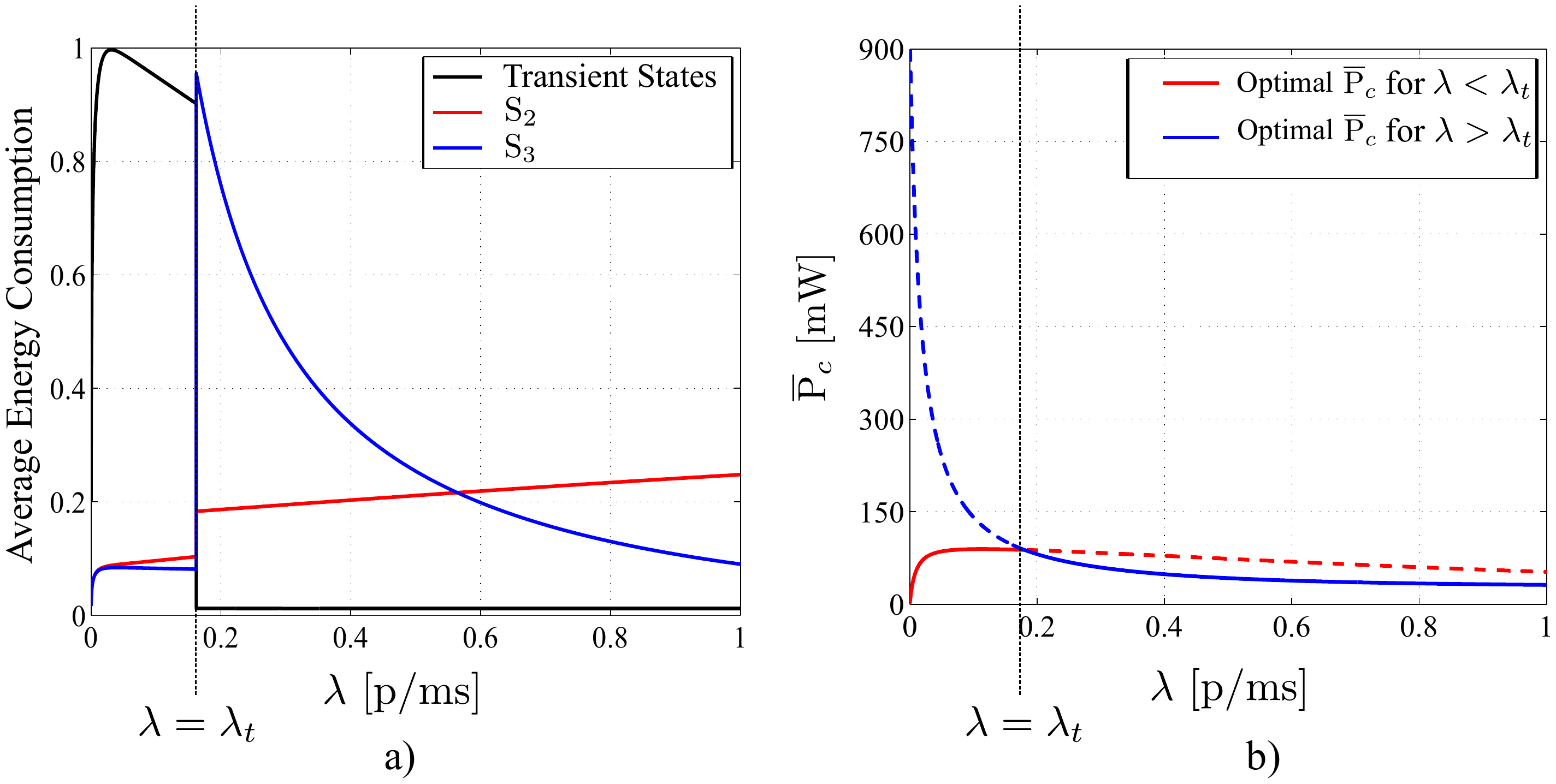} 
\caption{{a) Normalized average energy consumption of each state, as result of optimal configuration of wake-up scheme parameters; b) average overall power consumption of optimized wake-up scheme.}}
\label{fig:small_large_lambda}
\end{figure}

Finally, it can be shown that for $\lambda$ less than the turnoff packet arrival rate ($0< \lambda \le {\lambda}_t$), the optimal parameter values ($t_w^*$ and $t_i^*$) of the original MINLP (\ref{eq:i1})-(\ref{eq:c6}) can be written based on the optimal values of the equivalent relaxed problem (\ref{eq:i2})-(\ref{eq:c7}) as follows 
  \begin{align}
t_w^*=\lfloor t_{w_b} \rfloor~ \text{and} ~ t_i^*=1,
 \label{eq:A}
\end{align}
\noindent where $\lfloor  ~  \rfloor$ refers to the floor function.

\begin{theorem}
$t_w^*=\lfloor t_{w_b} \rfloor~ \text{and} ~ t_i^*=1$ are the optimal parameter values  of the  MINLP in \eqref{eq:i1}-\eqref{eq:c6} for the range  $0< \lambda \leq {\lambda}_t$. 
\end{theorem}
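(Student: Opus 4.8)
The plan is to deduce the optimum of the MINLP \eqref{eq:i1}--\eqref{eq:c6} directly from the optimum of the relaxed problem obtained in Theorem~1, exploiting the monotonicity properties of $\overline{\text{P}}_c$ and $\overline{\text{D}}$ established in Section~\ref{sec:sysmodel} together with the sign pattern \eqref{eq:ifbwzz}. For $0<\lambda\le\lambda_t$, Theorem~1 gives the relaxed optimum $(t_w,t_i)=(t_{w_b},1)$. Since $t_i^*=1$ already satisfies the integrality constraint \eqref{eq:c6}, only $t_w$ needs to be ``discretized,'' and the natural candidate is $\lfloor t_{w_b}\rfloor$: with $t_i=1$ fixed, this is the largest integer that stays feasible, because $\overline{\text{D}}(\cdot,1)$ is strictly increasing in $t_w$ and $\overline{\text{D}}(t_{w_b},1)=\overline{\text{D}}_{\max}$. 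So the target is to show $(\lfloor t_{w_b}\rfloor,1)$ is feasible and minimizes $\overline{\text{P}}_c$ over all feasible integer pairs.

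First I would check feasibility: since $1\le\lfloor t_{w_b}\rfloor\le t_{w_b}$ and $\overline{\text{D}}(t_w,1)$ is increasing in $t_w$, we get $\overline{\text{D}}(\lfloor t_{w_b}\rfloor,1)\le\overline{\text{D}}(t_{w_b},1)=\overline{\text{D}}_{\max}$, and $\lfloor t_{w_b}\rfloor\ge1$ by \eqref{eq:twb}. Next, to prove optimality I would take an arbitrary feasible integer point $(t_w,t_i)$ and bound $\overline{\text{P}}_c(t_w,t_i)$ from below by $\overline{\text{P}}_c(\lfloor t_{w_b}\rfloor,1)$, splitting into two cases. If $t_w\le\lfloor t_{w_b}\rfloor$, then since $\overline{\text{P}}_c$ is increasing in $t_i$ we have $\overline{\text{P}}_c(t_w,t_i)\ge\overline{\text{P}}_c(t_w,1)$ (as $t_i\ge1$), and since $\overline{\text{P}}_c$ is decreasing in $t_w$ we have $\overline{\text{P}}_c(t_w,1)\ge\overline{\text{P}}_c(\lfloor t_{w_b}\rfloor,1)$; done. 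If instead $t_w\ge\lfloor t_{w_b}\rfloor+1$, then $t_w>t_{w_b}$, so feasibility $\overline{\text{D}}(t_w,t_i)\le\overline{\text{D}}_{\max}$ together with the boundary identity $\overline{\text{D}}(t_w,t_i(t_w))=\overline{\text{D}}_{\max}$ from \eqref{eq:bn} and the strict monotonicity of $\overline{\text{D}}(t_w,\cdot)$ forces $t_i\ge t_i(t_w)$; hence $\overline{\text{P}}_c(t_w,t_i)\ge\overline{\text{P}}_c(t_w,t_i(t_w))=\overline{\text{P}}_b(t_w)$, and since \eqref{eq:ifbwzz} makes $\overline{\text{P}}_b$ increasing on $[t_{w_b},\infty)$ for $\lambda\le\lambda_t$, we obtain $\overline{\text{P}}_c(t_w,t_i)\ge\overline{\text{P}}_b(\lfloor t_{w_b}\rfloor+1)$.

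The hard part will be closing the second case, i.e.\ establishing $\overline{\text{P}}_b(\lfloor t_{w_b}\rfloor+1)\ge\overline{\text{P}}_c(\lfloor t_{w_b}\rfloor,1)$, which compares the two lattice points straddling the corner $t_{w_b}$ --- one on the ``$t_i=1$'' segment of the feasible boundary (where $\overline{\text{P}}_c$ decreases in $t_w$) and one on the boundary curve $t_i=t_i(t_w)$ (where $\overline{\text{P}}_b$ increases in $t_w$) --- so plain monotonicity does not settle it. When $t_{w_b}$ happens to be an integer the difficulty disappears, since then $(\lfloor t_{w_b}\rfloor,1)=(t_{w_b},1)$ is exactly the relaxed optimum and is therefore a fortiori optimal for the more constrained MINLP. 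In the generic case I would argue that, precisely because $\overline{\text{P}}_b$ is increasing for $\lambda\le\lambda_t$, the boundary timer $t_i(t_w)$ must grow with $t_w$ fast enough to outweigh the power saving from a larger $t_w$, so that moving from $\lfloor t_{w_b}\rfloor$ to $\lfloor t_{w_b}\rfloor+1$ (crossing the corner) pushes the smallest admissible integer $t_i$ strictly above $1$, which raises $\overline{\text{P}}_c$ by more than the small decrease along the ``$t_i=1$'' segment between $\lfloor t_{w_b}\rfloor$ and $t_{w_b}$; this can be made quantitative from the explicit expressions \eqref{eq:bn} and \eqref{eq:xc}. Should a fully general estimate prove awkward, one falls back on the standard reasoning that the MINLP optimum is obtained by rounding the relaxed optimum subject to feasibility, which yields exactly $t_w^*=\lfloor t_{w_b}\rfloor$, $t_i^*=1$.
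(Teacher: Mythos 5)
Your argument follows essentially the same route as the paper's: reduce the search to the delay-constraint boundary, invoke the sign pattern \eqref{eq:ifbwzz} of $\frac{d\overline{\text{P}}_b}{dt_w}$ for $0<\lambda\le\lambda_t$, and conclude that the smallest feasible wake-up cycle wins. Your two-case lattice decomposition is in fact cleaner than the paper's: feasibility of $(\lfloor t_{w_b}\rfloor,1)$ and the case $t_w\le\lfloor t_{w_b}\rfloor$ are handled rigorously by the monotonicity of $\overline{\text{P}}_c$ in each variable, whereas the paper routes everything through discrete difference analogues \eqref{eq:Arrrr} of the partial-derivative signs and a discretized boundary $(\lfloor t_w(t_i)\rfloor,t_i)$, $t_i\in\{1,2,\dots\}$.

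The step you flag as ``the hard part'' --- showing $\overline{\text{P}}_b(\lfloor t_{w_b}\rfloor+1)\ge\overline{\text{P}}_c(\lfloor t_{w_b}\rfloor,1)$, i.e.\ comparing the two integer candidates straddling the corner $t_{w_b}$ --- is a genuine gap in your write-up as it stands. Monotonicity alone gives only $\overline{\text{P}}_b(\lfloor t_{w_b}\rfloor+1)\ge\overline{\text{P}}_b(t_{w_b})=\overline{\text{P}}_c(t_{w_b},1)\le\overline{\text{P}}_c(\lfloor t_{w_b}\rfloor,1)$, which points the wrong way, and your fallback --- ``round the relaxed optimum subject to feasibility'' --- is not a valid principle for MINLPs in general (rounding a relaxed optimum need not be integer-optimal). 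To close it you would have to show quantitatively, from \eqref{eq:bn} and \eqref{eq:xc}, that the unit increase forced in $t_i$ (from $1$ to at least $2$, since $t_i(\lfloor t_{w_b}\rfloor+1)>1$) raises $\overline{\text{P}}_c$ by more than the sub-unit decrease in $t_w$ from $t_{w_b}$ down to $\lfloor t_{w_b}\rfloor$ lowers it. Be aware, however, that the paper's own proof elides exactly the same step: it asserts that the power over the \emph{discretized} boundary increases with $t_w$ by appealing to the continuous result $\frac{d\overline{\text{P}}_b}{dt_w}>0$, without accounting for the fact that $\overline{\text{P}}_c(\lfloor t_w(t_i)\rfloor,t_i)\neq\overline{\text{P}}_b(t_w(t_i))$ once the floor is applied. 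So you have correctly isolated the one nontrivial corner case; you have not resolved it, but neither does the paper.
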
 

 \begin{proof}
If $f$ is a   function of continuous variables  $x$ and $y$, it can easily be shown that 
 {  \begin{equation}
    \begin{split}
\text{if}~ \frac{\partial f}{\partial x}>0 \Longrightarrow \Delta {f}_x > 0, ~\forall x\in \mathbb{R}^+\\
\text{if}~ \frac{\partial f}{\partial x}<0 \Longrightarrow \Delta {f}_x<0,  ~\forall x\in \mathbb{R}^+
 \label{eq:Arrrr}
 \end{split}
\end{equation} }
\noindent where 
  { \begin{align}
\Delta {f}_x\triangleq f(\lfloor x \rfloor +1)-f(\lfloor x \rfloor) .
 \label{eq:Az}
\end{align}}
Therefore, by assuming $f$  as representative of either $\overline{\text{P}}_{c}(t_w,t_i)$ or  $\overline{\text{D}}(t_w,t_i)$   and $x$ as of either $t_i$ or $t_w$,  one  can prove, similarly to   the case with continuous variables (proved in  Section \ref{sec:feasib}), that the optimal parameters of MINLP (\ref{eq:i1})-(\ref{eq:c6}) are laid over the boundary.  Therefore, the boundary of (\ref{eq:i1})-(\ref{eq:c6}) consists of all combinations of $(\lfloor t_w \rfloor, t_i)$ for which $ t_i \in \{1,2,...\}$ and $ \overline{\text{D}}(t_w,t_i)= \overline{\text{D}}_{\max}$ (as formulated in (\ref{eq:bn})). Similarly, based on (\ref{eq:Arrrr}) and (\ref{eq:part}), as well as the properties of the floor function, we can state that increasing $t_i$ over the boundary may increase $\lfloor t_w \rfloor$ ($\Delta {\lfloor t_w \rfloor}_{t_i}>0$) or  $\lfloor t_w \rfloor$ can remain in its previous value ($\Delta {\lfloor t_w \rfloor}_{t_i}=0$). Furthermore,  based on (\ref{eq:ifbwzz}), for  $0< \lambda < {\lambda}_t$,  we can conclude that  $\Delta { \overline{\text{P}}_{b}}(t_w,t_i)_{~t_{w}}>0$.  Therefore, $t_w^*$ is the smallest feasible value of the wake-up cycle over the boundary, i.e.,  $t_w^*=\lfloor t_{w_b} \rfloor$. However, $\lfloor t_{w_b} \rfloor$ may correspond to either $t_i=1$ or larger values at the same time. Since the power consumption has the lowest value at the lowest $t_i$, for fixed $t_w$, we can conclude  that  $t_i^*=1$.
\end{proof}

\begin{corollary}
It turns out that the solution of the relaxed problem in \eqref{eq:i2}-\eqref{eq:c7}, after finding the integer part of one of the two optimization parameters' values, are actual optimal values for the original MINLP optimization problem in (\ref{eq:i1})-(\ref{eq:c6}). Therefore, our relaxation approach   yielded an equivalent reformulation.
 \end{corollary}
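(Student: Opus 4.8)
The plan is to derive this corollary directly from Theorem~1, Theorem~2 and Theorem~3, since it is a structural observation that packages what those results already establish rather than a fresh computation. First I would fix the regime $0<\lambda\le\lambda_t$ and recall that Theorem~1 gives the minimizer of the relaxed problem \eqref{eq:i2}-\eqref{eq:c7} as $(t_w^*,t_i^*)=(t_{w_b},1)$, while Theorem~3 gives the minimizer of the MINLP \eqref{eq:i1}-\eqref{eq:c6} as $(t_w^*,t_i^*)=(\lfloor t_{w_b}\rfloor,1)$. The key observation is that these two optimizers agree in the $t_i$-coordinate, which already equals the integer $1$ and hence requires no rounding, and differ in the $t_w$-coordinate only through the floor operation. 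Thus to pass from the relaxed optimizer to the MINLP optimizer it suffices to replace the single non-integer coordinate $t_{w_b}$ by $\lfloor t_{w_b}\rfloor$ — exactly the assertion that taking the integer part of \emph{one} of the two parameters recovers the true optimum. Since $t_{w_b}\ge 1$ by \eqref{eq:twb}, the rounded point lies in $\{1,2,\dots\}$ and is a genuinely feasible MINLP point, so there is no degenerate case to rule out, and the optimal objective values of \eqref{eq:i1}-\eqref{eq:c6} and of the rounded relaxed solution coincide.

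Next I would dispatch the complementary regime $\lambda_t<\lambda<1$. Here Theorem~2 places the relaxed optimum at $(t_w^*,t_i^*)=(+\infty,+\infty)$, and by the same monotonicity arguments used in the proof of Theorem~3 — that the forward difference of $\overline{\text{P}}_b$ in $t_w$ inherits the sign of $\frac{d\overline{\text{P}}_b}{d t_w}$, which is negative in this range — the MINLP shares this limiting behaviour; flooring $+\infty$ is vacuous, so the claim holds trivially and is consistent with Corollary~1's reading that the WuS is effectively switched off. Combining the two regimes, in every case the minimizer of the MINLP is obtained from the minimizer of the relaxed problem by applying the floor to at most one coordinate, whence the two problems attain the same optimal value; this identity of optimal values, together with the explicit rounding map, is precisely what is meant by calling the relaxation an equivalent reformulation.

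I do not anticipate a genuine obstacle: the substantive work has already been carried out in Theorems~1--3. The only points that deserve a clean statement are (i) that the $t_i$-coordinate is automatically integral in both problems, so only $t_w$ must be rounded, and (ii) that $\lfloor t_{w_b}\rfloor$ is simultaneously feasible and optimal for the MINLP — but (ii) is exactly the content of Theorem~3 and can simply be invoked. A secondary subtlety worth articulating is the precise meaning of \emph{equivalent}: it is not that the relaxed and integer feasible sets coincide, but that the closed-form procedure ``solve \eqref{eq:i2}-\eqref{eq:c7}, then take $\lfloor t_{w_b}\rfloor$'' returns a global optimum of \eqref{eq:i1}-\eqref{eq:c6} along with its optimal value, so nothing is lost in relaxing the integrality constraint \eqref{eq:c6}.
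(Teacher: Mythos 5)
Your proposal is correct and matches the paper's (implicit) reasoning: the corollary is stated without a separate proof precisely because it is a repackaging of Theorem~3 (the MINLP optimum is $(\lfloor t_{w_b}\rfloor,1)$) compared against Theorem~1 (the relaxed optimum is $(t_{w_b},1)$), with the $t_i$-coordinate already integral so only $t_w$ needs flooring. Your additional care about the regime $\lambda_t<\lambda<1$ and about what ``equivalent reformulation'' means (a rounding map recovering a global MINLP optimizer, not coincidence of feasible sets or of optimal values of the two problems) is consistent with, and slightly more explicit than, what the paper writes.
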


\section{Numerical Results}
\label{sec:eval}
In this section, a set of numerical results are provided in order to validate our concept and the analytical results, as well as to show and compare the average power consumption of the optimized WuS over DRX for  packet arrival rates less than the turnoff packet arrival rate. Power consumption of the mobile device in different operating states  is highly dependent on the   implementation, and also its operational configurations. Therefore, for   the numerical results, the power consumption model   used in \cite{Globecom, pm2}, \cite{pm1}, \cite{Lauridsen2014} is employed. Its parameters for DRX and WuS are shown in Table  \ref{drxpower} and Table \ref{tab:powr}, respectively. LTE-based power consumption values, shown in Table  \ref{drxpower}, are considered as a practical example since those of the emerging NR modems are not publicly available yet. {For simulations, we use $\phi=1.1$ as an example numerical value, while methodology wise, numerical results can also be generated for any other value as well.}

\begin{table}[!t]
\scriptsize
\renewcommand{\arraystretch}{1.3}
\caption{{ Average transitional time  and representative power consumption values of LTE-based cellular module during short and long DRX when carrier bandwidth is $20$ MHz, and $\phi=1.1$}}
\label{drxpower}
\centering
\scriptsize
\begin{tabular}{|c|c|c|c|c|c|c|c|}
    \hline
     DRX Cycle &$\text{PW}_{\text{sleep}}$& $\text{PW}_{\text{active}}$ &$\text{PW}_{\text{decode}}$ &  $t_{su}$&   $t_{pd}$  \\
    \hline
    \hline
      short&  $395$ mW&    $850$ mW & $935$ mW&  $1$ ms &   $1$ ms\\
          \hline
      long&  $\approx0$ mW&       $850$ mW&$935$ mW&   $15$ ms &   $10$ ms\\
          \hline
\end{tabular}
\end{table} 

\begin{table}[!t]
\scriptsize
\renewcommand{\arraystretch}{1.3}
\caption{{{Assumed power consumption parameters of the wake-up scheme with assumption of $\phi=1.1$ }}}
\label{tab:powr}
\centering
\begin{tabular}{|c||c|c|c|c|c|c|c|c|c|c|}
    \hline
     PW$_{\text{1}}$  & PW$_{\text{2}}$  &PW$_{\text{3}}$  & PW$_{\text{4}}$ &$t_{su}$  & $t_{pd}$& $t_{on}$  \\
    \hline
    \hline
      57mW  &935 mW  &   850 mW  &   $\approx0$ mW&   15 ms  &   10 ms &   1/14 ms \\
    \hline
\end{tabular}
\end{table}

Two different   sets of performance results, in terms of power consumption and delay, are presented based on the optimal configuration of the wake-up parameters  (\ref{eq:A}). Namely, a) with simplified assumptions of zero false alarm/misdetection rates, and $t_{on}\approx 0$ ms, equivalent to  analytical results  (ana.), and b)   with the realistic assumptions of $P_{fa}=10 \%$, $P_{md}=1 \%$, $t_{on}=1/14$ ms obtained by simulations and \cite{Globecom}, referred to as simulation results (sim.).  

Table \ref{optimal_tw} shows the optimal resulting values of $t_w^*$ in (\ref{eq:A}) for different values of $\lambda$ and $\overline{\text{D}}_{\max}$. As it can be observed, for tight delay requirements ($\overline{\text{D}}_{\max}=30$ ms),  $t_w^*$ tends to be small, enabling the UE to reduce the duration of packet buffering. Interestingly, for mid range of packet arrival rates ($\lambda=0.1$ p/ms), optimal wake-up cycle for a given delay bound is shorter than for both lower and higher packet arrival rates. The justification is as follows.  For higher packet arrival rates,  $t_w^*$   becomes larger,  the reason being that the inactivity timer is ON most of the time. Therefore, the need for smaller wake-up cycles decreases and correspondingly higher energy overhead is induced. For lower packet arrival rates, in turn, the value of $t_w^*$ is higher due to the infrequent packet arrivals, hence achieving a smaller delay. 

\begin{table}[!t]
\scriptsize
\centering
\renewcommand{\arraystretch}{1.3}
\caption{{Optimal values of wake-up cycle under different delay requirements and packet arrival rates ($t_i^*=1$ ms)}}
\label{optimal_tw}
\begin{tabular}{|c|c|c|c|c|c|c|c|c|c|}
\hline
$\lambda$ [p/ms]& \multicolumn{3}{c|}{$0.01$} & \multicolumn{3}{c|}{$0.08$} & \multicolumn{3}{c|}{$0.15$} \\ \hline
$\overline{\text{D}}_{\max} $[ms] &$30$&$75$&$500$&$30$&$75$&$500$&$30$&$75$&$500$\\ \hline
$t_w^*$ [ms] & 180 & 380 &2099  & 124 & 315& 2124 & 125 &328 &2246  \\ \hline
\end{tabular}
\end{table}

{Fig.} \ref {fig:PC_lambda} illustrates the power consumption of the proposed WuS under ideal and realistic assumptions as a function of the packet arrival rate ($\lambda< \lambda_t$), for different maximum tolerable delays. 
As it can be observed, for both analytical and simulation results, and for all  delay bounds, the  average power consumption initially  increases,  while then remains almost constant (especially for large delay bounds) as $\lambda$ increases, due to the configuration of shorter  wake-up cycles for mid packet arrival rates  (see Table \ref{optimal_tw}). Moreover, the UE consumes higher power in order to satisfy tighter delay requirements, which, as shown in Table   \ref{optimal_tw}, can be translated into shorter $t_w^*$. Furthermore,  {Fig.} \ref {fig:PC_lambda} shows that the simulation results closely follow the analytical results, the non-zero gap being due to the non-zero false alarm and misdetection rates.   The relative gap between simulation-based and analytical results is somewhat larger for shorter delay bounds, which stems from the correspondingly higher number of wake-up instances.  

\begin{figure}[!t]
\centering
\includegraphics[width=0.39 \textwidth]{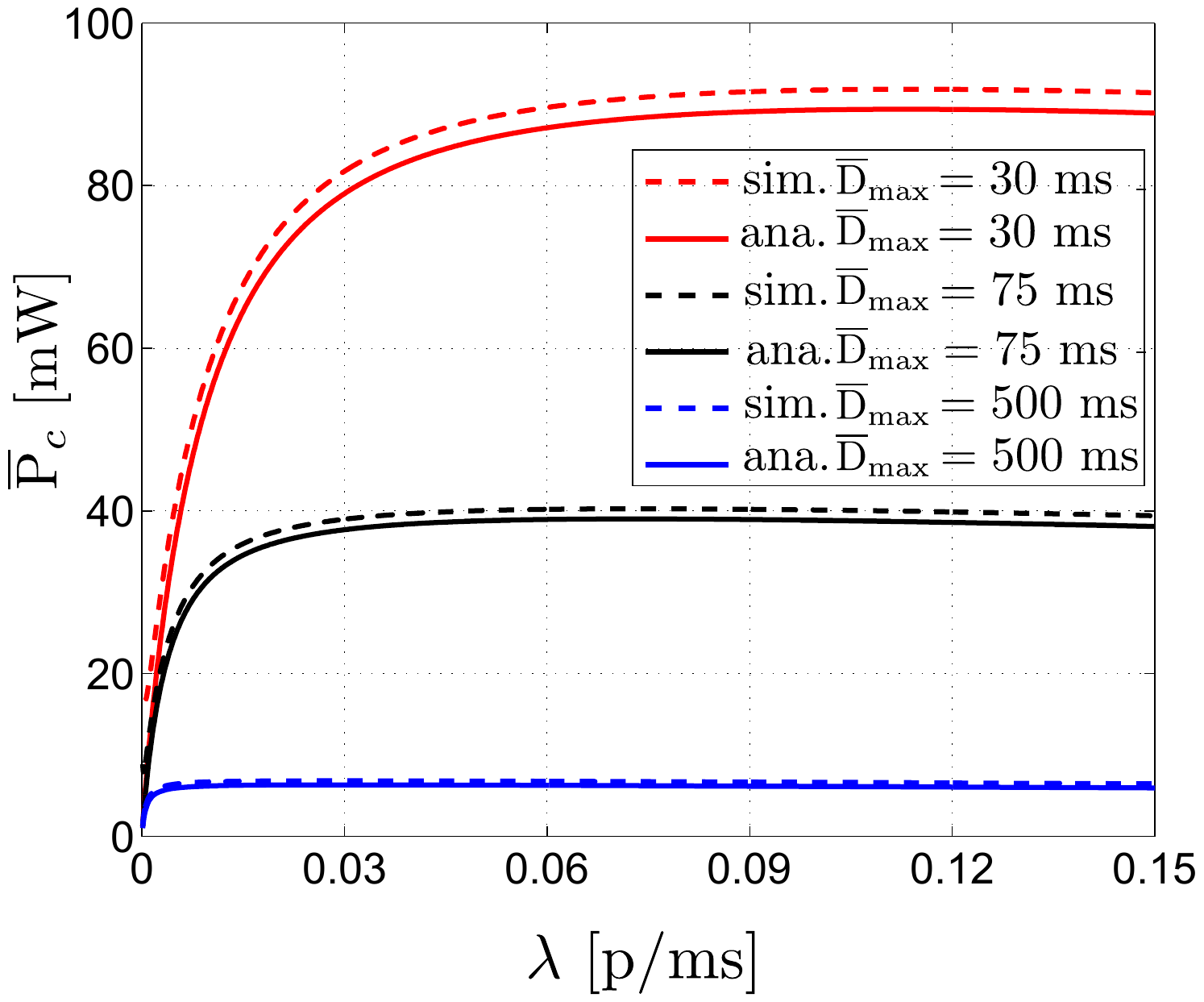}
\caption{{Average power consumption of the optimized wake-up scheme, under ideal and realistic assumptions as a function of packet arrival rate and delay bound.}}
\label{fig:PC_lambda}
\end{figure}
Moreover, {Fig.} \ref{fig:delay_lambda} depicts the average packet delay experienced by the WRx-enabled UE under ideal and realistic assumptions when  packet arrival rates vary. As it can be observed, the analytical  delay based on the optimal parameter configuration (\ref{eq:A}) is slightly shorter than the maximum tolerable delay. This is because of selecting  the greatest integer less than or equal to the optimal  wake-up cycle of the relaxed optimization problem. However, the actual average delay is slightly  higher than the analytical  average delay, especially for high delay bounds. The main reason for such negligible excess delay is the unavoidable misdetections, whose impact is more clear for large wake-up cycles corresponding to high delay bounds. In practice, to compensate for such small excess delay, the delay bound can be set slightly smaller than the actual average delay requirement.

\begin{figure}[!t]
\centering
\includegraphics[width=0.4 \textwidth]{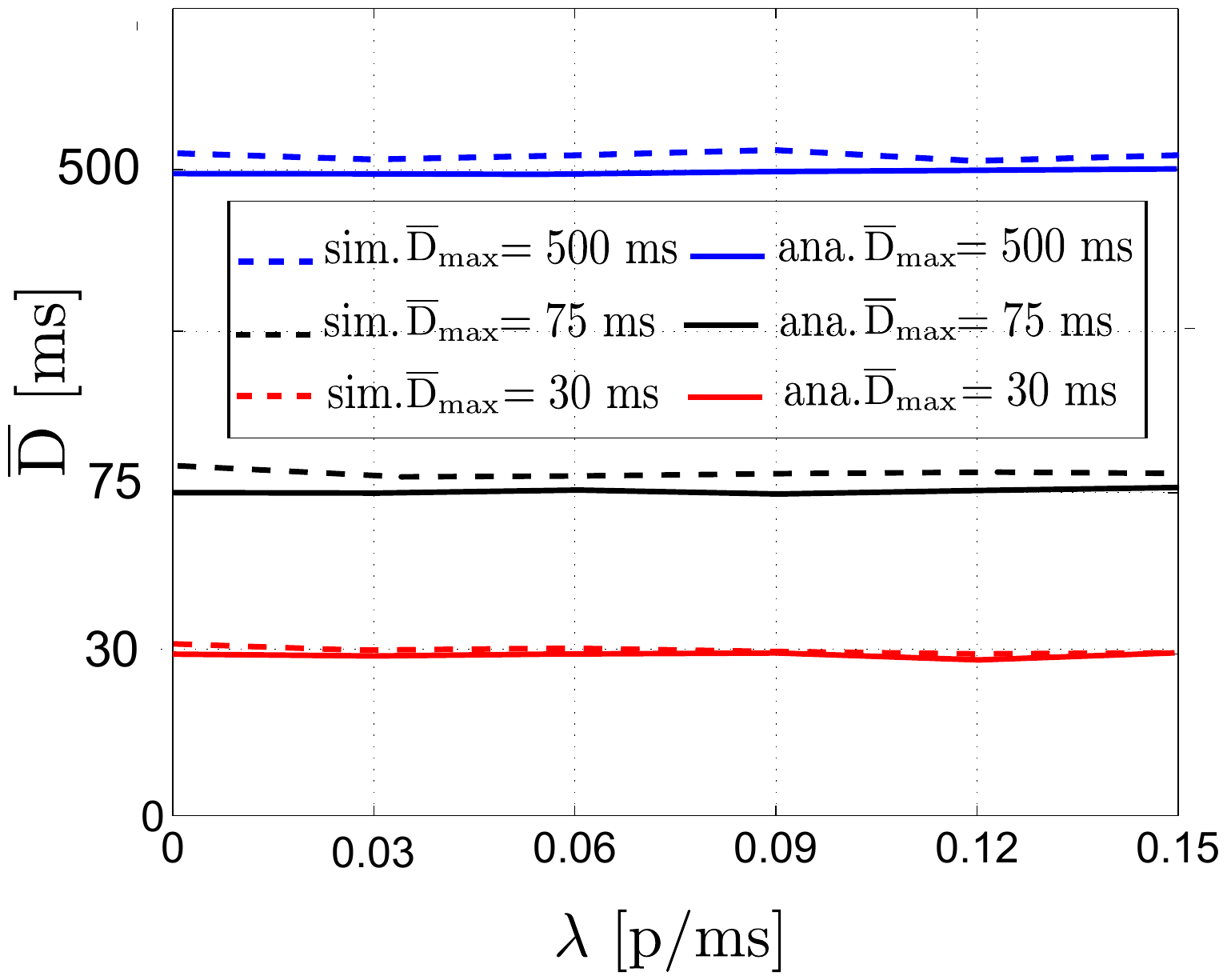}
\caption{Average buffering delay of the optimized wake-up scheme, under ideal and realistic assumptions as a function of packet arrival rate and delay bound. For better visualization, the y-axis is deliberately not in linear scale.}
\label{fig:delay_lambda}
\end{figure}

Finally, for comparison purposes, the relative power saving of WuS over DRX representing the amount of power that can be saved with WuS as compared to the DRX-based reference system is utilized, assuming the same delay constraints in both methods. The value of the relative power saving ranges from $0$ to $100\%$, and a large value
indicates that the WuS conserves energy better than the
DRX. Formally, we express the relative power saving ($\eta$) as 
  \begin{equation}\label{eq:etapgm}
\eta=\frac{\overline{\text{P}}_{\text{DRX}}-\overline{\text{P}}_c}{\overline{\text{P}}_{\text{DRX}}}\times 100,  
 \end{equation}
\noindent where  $\overline{\text{P}}_{\text{DRX}}$ refers to average power consumption of DRX. Furthermore, for a fair comparison, we consider an exhaustive search over a large parameter set of DRX configuration, developed by authors in \cite{Ramazanali}. However, in order to  take  start-up and power-down power consumption into account, the solution in \cite{Ramazanali} is slightly modified to account for the transitory states.  

\begin{figure}[!t]
\centering
\includegraphics[width=0.4 \textwidth]{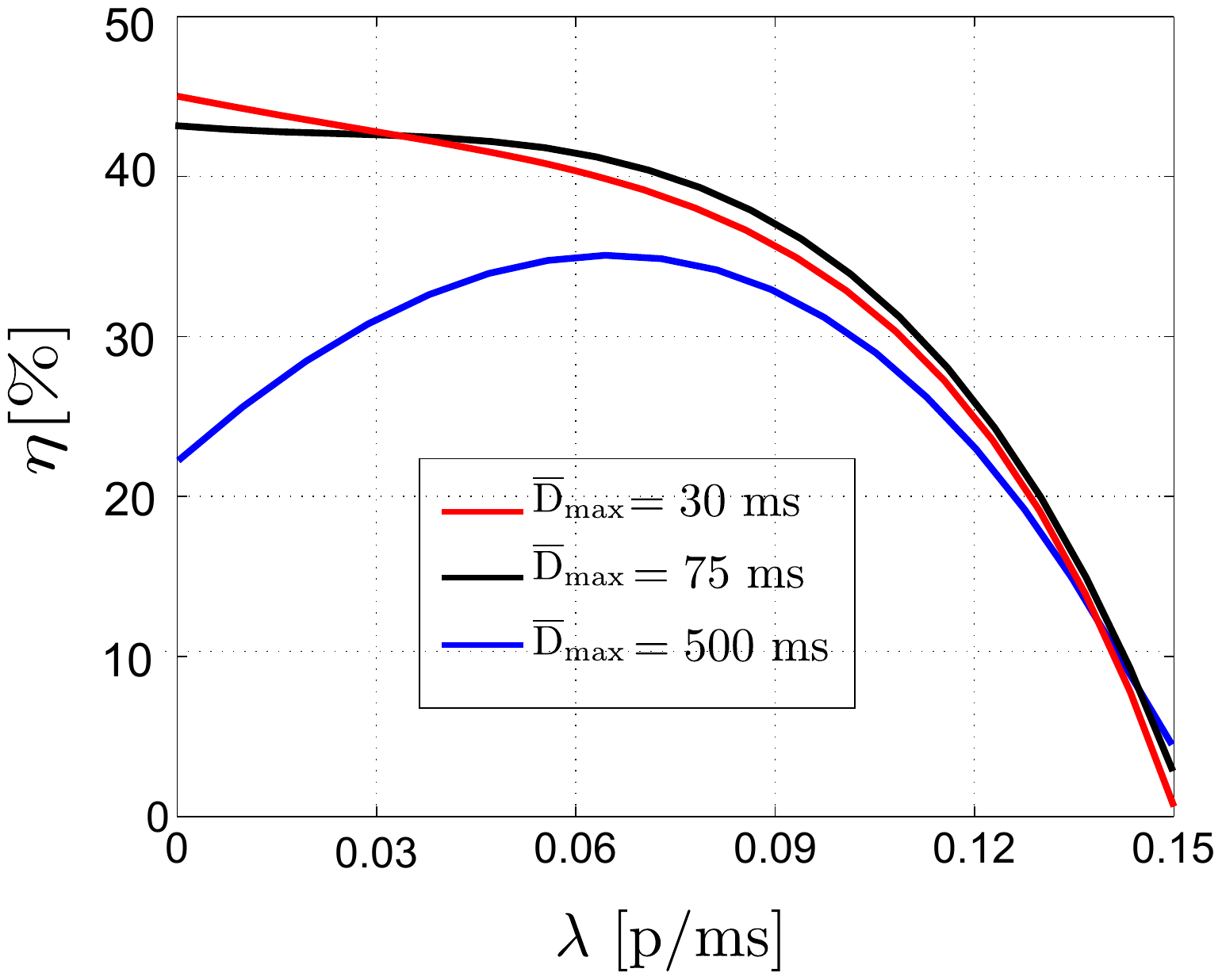}
\caption{{Achieved relative power saving values ($\eta$) of the proposed optimized wake-up scheme as a function  of packet arrival rate and delay bound.}}
\vspace{-3mm}
\label{fig:relative_pc}
\end{figure}

{Fig. \ref{fig:relative_pc}} shows the power saving results. It is observed that the proposed WuS, under realistic assumptions, outperforms DRX within the range $\lambda< \lambda_t$, especially for low packet arrival rates with tight delay requirements. 
The main reason is that in such scenarios, DRX-based device needs to decode the control channel very often, residing mainly in short DRX cycles, which causes extra power consumption. The WRx, in turn, needs to decode the wake-up signaling frequently, but with lower power overhead. Additionally, as expected, regardless of the delay requirements, for higher packet arrival rates, DRX infers relatively similar power consumption to the WuS. The reason is that in such cases, the DRX parameters can be configured in such a way that there is a low amount of unscheduled DRX cycles, either by utilizing  short DRX cycles for very tight delay bounds or by employing  long DRX cycles for large delay requirements. Overall, the results in Fig. \ref{fig:relative_pc} clearly demonstrate that WuS can provide substantial energy-efficiency improvements compared to DRX, with the maximum energy-savings being in the order of 40\%.   
\section{Discussions and Final Remarks}
\label{sec:remark}
In this section, three interesting remarks are drawn and discussed. 

\vspace{2mm}
\textbf{Remark 1:} The proposed WuS is fully independent of DRX, which means that both methods can co-exist, interact and be used together to reduce energy consumption of the UE even further. Based on the  numerical results provided in this paper, our opinion regarding the power saving mechanisms for    moderate and generic  mobile users is that    there is no 'One-Size-Fits-All Solution', unless the UE is well-defined and narrowed  to a specific application  and QoS requirement. For a broad range of applications and QoS requirements, there is a need for combining and utilizing different power saving mechanisms, and selecting the method that fits best for particular circumstances. For example, the WuS can be utilized for packet arrival rates lower than the turnoff packet arrival rate; for higher packet arrival rates or shorter delay bounds (e.g., smaller than $30$ ms), DRX may eventually be the preferable method of choice; for ultra-low latency requirements, other power saving mechanisms may be needed, building on, e.g., microsleep \cite{Lauridsenthesis} or pre-grant message \cite{pregrant,8616818} concepts. Further, depending on whether an RRC context is established or not, the WuS is agnostic to the RRC states, and can be adopted for idle (delay bound is in range of some hundreds of milliseconds), inactive, and connected modes.

\begin{table*} 
\small
\centering
\renewcommand{\arraystretch}{1.4}
\caption{{Minimum power consumption values [mW] as function of TTI size, delay bounds and packet arrival rates} }
\label{tti}
\begin{tabular}{|c|c|c|c|c|c|c|c|c|c|}
\hline
$\lambda$ [p/ms]& \multicolumn{3}{c|}{$0.01$} & \multicolumn{3}{c|}{$0.08$} & \multicolumn{3}{c|}{$0.15$} \\ \hline
$\overline{\text{D}}_{\max} $[ms] &$30$&$75$&$500$&$30$&$75$&$500$&$30$&$75$&$500$\\ \hline
$\overline{\text{P}}_c$ @ TTI$~=1$ ms& 54.2 &31.6&6.6 &88.7 & 39& 6.1 &88.9&38.1&5.9  \\ \hline
$\overline{\text{P}}_c$ @ TTI$~=500$ $\mu$s&  50.4&29.4 &5.7 & 83.7  &36.8& 5.8 &85.4 &36.6  &5.7\\ \hline
$\overline{\text{P}}_c$ @ TTI$~=250$ $\mu$s&  48.3& 28.1 &5.5& 81.1 &35.7& 5.7& 83.7&35.9 &5.6 \\ \hline
$\overline{\text{P}}_c$ @ TTI$~=125$ $\mu$s&  47.5&27.7 &5.4& 80.1 &35.3& 5.6& 83.1&35.7 &5.6 \\ \hline
\end{tabular}
\end{table*}

\vspace{2mm}
\textbf{Remark 2:} As mentioned in Section  \ref{sec:sysmodel}, a  latency-optimized frame structure  with flexible numerology    is adopted in 5G NR, for which the  {slot length scales down when the numerology increases} \cite{nr2017}.  In this work,    different time intervals within the WuS  are defined as multiples of a   time unit of a TTI with a duration of $1$ ms. As shown before, the minimum power consumption over the boundary is limited by the minimum feasible value of $t_i$. Therefore, if the TTI can be selected even smaller, i.e., with finer granularity, the optimal power consumption can be further reduced. In this line, Table \ref{tti} presents the power consumption with different TTI sizes (corresponding to NR numerologies 0, 1, 2, and 3~\cite{TS38300}), delay bounds, and packet arrival rates.

Interestingly, Table \ref{tti} shows how the  {5G NR numerologies} facilitate the use of WuS and improve the
applicability and energy saving potential of WuS compared to longer TTI sizes. Besides the smaller $t_i$ sizes, with smaller TTIs, the corresponding optimal wake-up cycles are more fine-grained.  With shorter TTI sizes down to $125$ $\mu$s,  the proposed  WuS can  provide up to  $12 \% $ additional energy savings compared to the baseline $1$ ms TTI.  Therefore, on average, the benefit of the flexible NR frame structure is not only for low latency communication but it can also offer energy savings depending on the traffic arrival rates and delay constraints.

\vspace{2mm}
\textbf{Remark 3:} The traffic model assumed in this article is basically well-suited for  the periodic nature of DRX. For instance,  voice calls and video streaming have such periodic behaviour. However, in other application areas such as  MTCs, where sensors can be aperiodically polled by either a user or a machine, the traffic will have more non-periodic patterns. In such case, the DRX may not fit well, while the WuS has more suitable characteristics, being more robust and agnostic  to the traffic type.  
\section{Conclusions and Future Work}
\label{sec:conc}
In this article, wake-up based downlink access under delay constraints was studied in the context of 5G NR networks, with particular focus on energy-efficiency optimization. It was shown that the performance of the wake-up scheme   is governed by a set of two parameters that interact with each other in an intricate manner. To find the optimal wake-up parameters configuration, and thus to take full advantage of the power saving capabilities of the wake-up scheme, a constrained optimization problem was formulated, together with the corresponding closed-form solution. Analytical and simulation  results  showed that the proposed scheme is an efficient approach to  reduce  the device energy consumption, while ensuring a predictable and consistent latency. The numerical results also showed that the optimized wake-up system outperforms the corresponding optimized DRX-based reference system in power efficiency. Furthermore,  the range of packet arrival rates within which the WuS works efficiently was established, while outside that range other power saving mechanisms, such as DRX or microsleep, can be used.

Future work includes extending the proposed framework to bidirectional communication scenarios with the corresponding downlink and uplink traffic patterns and the associated QoS requirements, as well as to consider other realistic assumptions that impact the energy-delay trade-offs, such as the communication rate and scheduling delays. Additionally, an interesting aspect is to investigate how to configure the wake-up scheme parameters for application-specific traffic scenarios, such as virtual and augmented reality, when both uplink and downlink traffics are considered.  Finally, focus can also be given to optimizing the wake-up scheme parameters based on the proposed framework, by utilizing not only traffic statistics but also short-term traffic pattern prediction by means of modern  machine learning methods.  
\section*{Appendix: Analysis of $\overline{\text{P}}_{b}(t_w)$}
\addcontentsline{toc}{section}{Appendices}
\renewcommand{\thesubsection}{\Alph{subsection}}

In order to find the optimal value of $t_w$ and correspondingly $t_i$, the derivation of $\overline{\text{P}}_{b}(t_w)$ with respect to $t_w$ is given as follows 
 {  \begin{equation}
\frac{d \overline{\text{P}}_{b}(t_w)}{d t_w}=\text{PW}_3\frac{Y(t_w)}{\big ({w_1+w_2t_w+w_3e^{-\lambda t_w}}\big )^2},
 \label{eq:Awe}
\end{equation}
\noindent where
  \begin{align}
   \label{eq:uuu}
Y(t_w)& =F_1+(F_2-\lambda F_3t_w)e^{-\lambda t_w} ,\\
\label{eq:f1}
  F_1&=(1+e^{\lambda t_s})(\overline{\text{D}}_{\max}-\frac{t_s}{2})(t_s e^{\lambda t_s}+\frac{{1}}{\lambda}) \times \nonumber \\
& \quad  \big(\frac{1}{2}(t_{pd}-\phi t_{su})+\frac{{1}}{\lambda}+2(\overline{\text{D}}_{\max}-\frac{t_s}{2})\big)-\nonumber\\
& \quad \big (\frac{1}{2}(\phi t_{su}+t_{pd})-\frac{{1}}{\lambda}\big )(1+e^{\lambda t_s})^2(\overline{\text{D}}_{\max}-\frac{t_s}{2})^2,  \\
      \label{eq:f2}
F_2&=-F_1+(1+e^{\lambda t_s})(\overline{\text{D}}_{\max}-\frac{t_s}{2})^2 \times \nonumber \\ & \quad (t_s e^{\lambda t_s}+\frac{{1}}{{\lambda}})\big(-2+\lambda(2-\phi) t_{su}+\lambda t_{pd}\big), \\
 \label{eq:f3}
F_3&= F_1-2(1+e^{\lambda t_s})(\overline{\text{D}}_{\max}-\frac{t_s}{2})^2(t_s e^{\lambda t_s}+\frac{{1}}{{\lambda}}).
\end{align}}
{Based on typical values of $1\leq\phi< 2$, the condition $0 \leq (2-\phi) t_{su}+t_{pd}$ is met, and hence based on (\ref{eq:f1})-(\ref{eq:f3}), we can conclude that $F_1+F_2> 0, $  $F_2+F_3> 0 $ and  $F_1 > F_3$.  Furthermore, it can be shown that $F_1$ is a decreasing function of $\lambda$, which has a single root. We refer to its root as $\lambda_t$; where for all $\lambda <\lambda_t$, then $F_1>0$ while, for all $\lambda >\lambda_t$, then $F_1<0$. Root-finding algorithms can be utilized to find $\lambda_t$ as the $\lambda$ value that meets $F_1=0$.}
 
{To  determine whether $\frac{d \overline{\text{P}}_{b}(t_w)}{d t_w}$ is positive or negative, $Y(t_w)$ needs to be analyzed. 
\noindent By differentiating $Y(t_w)$ with respect to $t_w$, we  obtain 
  \begin{align}
\frac{{d} Y(t_w)}{d t_w}=-\lambda(F_2+F_3-\lambda F_3 t_w)e^{-\lambda t_w}.
 \label{eq:Awer}
\end{align}}
{Additionally, depending on whether $F_3$ and $F_1$ are positive or negative, the $\overline{\text{P}}_{b}(t_w)$ behaves differently. In order to characterize the behaviour of $\overline{\text{P}}_{b}(t_w)$, we define three mutually exclusive cases: Case A ($F_3>0$), Case B ($F_3<0$ and $F_1>0$), and Case C ($F_3<0$ and $F_1<0$). Due to fact that  $F_1 > F_3$,  in the former case,  $F_1$ is always positive.}

Case A ($F_3>0$): {Based on (\ref{eq:Awer}),  if $F_3$ is positive, $Y(t_w)$ is a decreasing function for the range of $t_w<\frac{F_2+F_3}{\lambda F_3}$ and an increasing function for $t_w>\frac{F_2+F_3}{\lambda F_3}$. Therefore, $Y(t_w)$ has a minimum point at $t_w=\frac{F_2+F_3}{\lambda F_3}$, where the $Y(t_w)$  at $t_w=\frac{F_2+F_3}{\lambda F_3}$ is positive, i.e.,   $Y(\frac{F_2+F_3}{\lambda F_3})=F_1-F_3e^{-\lambda t_w}>0$  (due to fact that $F_3<F_1$). As a result, $Y(t_w)$ is always positive and hence $\overline{\text{P}}_{b}(t_w)$ is a monotonous increasing function.}

{Based on (\ref{eq:Awer}),  if $F_3$ is negative (Case B or Case C), we can conclude that $Y(t_w)$ is a monotonous  decreasing function from $F_1+F_2>0$ to $F_1$.} 
  
Case B ($F_3<0$ and $F_1>0$):  {In this case, for all values of wake-up cycle, $Y(t_w)$ is positive, and hence $\overline{\text{P}}_{b}$  is a  monotonous increasing function.}
  
 Case C ($F_3<0$ and $F_1<0$): {In this case, $Y(t_w)$ for $t_w<t_{ws}$ is positive,  and it is  negative for $t_w>t_{ws}$; where $t_{ws}$ is a stationary point, i.e., $Y(t_{ws})=0$ or equivalently, $\frac{\partial \overline{\text{P}}_{b}(t_w)}{\partial t_w}\vert _{t_w=t_{w_s}}=0$. As a result, $\overline{\text{P}}_{b}(t_w)$ is an increasing function within $t_w<t_{ws}$, and a decreasing function for $t_w>t_{ws}$. For the typical range of parameters, we have consistently observed through simulations that $t_{w_s}<t_{w_b}$, so that   we can conclude that   $\overline{\text{P}}_{b}(t_w)$ is a decreasing function for the feasible range of the wake-up cycle (i.e., $t_{w}>t_{w_b}$).}

{To sum up,  for $F_1<0$, or equivalently,   ${\lambda}_t<{\lambda}$ (Case C), $\overline{\text{P}}_{b}(t_w)$ is a monotonous decreasing function  while, for ${\lambda}<{\lambda}_t$ (Case A or B), $\overline{\text{P}}_{b}(t_w)$ is a monotonous increasing function. Due to the relevance of ${\lambda}_t$, we refer to it as the turnoff packet arrival rate. }

\section*{Acknowledgment}
This work has received funding from the European Union's Horizon 2020 research and innovation program under the Marie Sk\l{}odowska-Curie grant agreement No. 675891 (SCAVENGE), Tekes TAKE-5 project,  {Spanish MINECO grant
TEC2017-88373-R (5G-REFINE), and Generalitat de Catalunya grant 2017 SGR 1195.}

\bibliography{Master}
\bibliographystyle{ieeetr}

\end{document}